\documentclass[journal]{IEEEtran}
\usepackage{amsfonts}
\usepackage{dsfont}
\usepackage{amssymb}
\usepackage{graphicx}
\usepackage{subfigure}
\usepackage{enumerate}
\usepackage{amsmath}
\usepackage{amsthm}
\usepackage{amsmath}

\usepackage[bookmarks=false]{hyperref}
\usepackage{algorithm}
\usepackage{algpseudocode}
\usepackage{cite}
\hypersetup{hidelinks}

\usepackage[utf8]{inputenc}


\hyphenation{since}

\newcommand{\bp}{\begin{proof} \small }
\newcommand{\ep}{\end{proof} \normalsize}
\newcommand{\epx}{\end{proof} \small}
\newcommand{\bpa}{\begin{proofappx} \footnotesize }
\newcommand{\epa}{\end{proofappx} \small }
\newtheorem{theorem}{Theorem}

\newtheorem{lemma}{Lemma}

\newtheorem{remark}{Remark}
\newtheorem{conjecture}{Conjecture}
\newtheorem*{theorem*}{Theorem}
\newtheorem*{proposition*}{Proposition}
\newtheorem*{corollary*}{Corollary}
\newtheorem*{lemma*}{Lemma}
\newtheorem*{assumption*}{Assumption}
\newtheorem*{definition*}{Definition}
\newtheorem*{claim*}{Claim}

\newcommand{\be}{\begin{equation}}
\newcommand{\ee}{\end{equation}}
\newcommand{\bs}{\begin{subequations}}
\newcommand{\es}{\end{subequations}}
\newcommand{\bq}{\begin{eqnarray}}
\newcommand{\eq}{\end{eqnarray}}
\newcommand{\bqn}{\begin{eqnarray*}}
\newcommand{\eqn}{\end{eqnarray*}}

\newcommand{\ba}{\left[ \begin{array}}
\newcommand{\ea}{\\ \end{array} \right]}
\newcommand{\ben}{\begin{enumerate}}
\newcommand{\een}{\end{enumerate}}

\def\real{{\mathchoice%
{\hbox{\rm\setbox1=\hbox{I}\copy1\kern-.45\wd1 R}}
{\hbox{\rm\setbox1=\hbox{I}\copy1\kern-.45\wd1 R}}
{\hbox{\scriptsize\rm\setbox1=\hbox{I}\copy1\kern-.45\wd1 R}}
{\hbox{\scriptsize\rm\setbox1=\hbox{I}\copy1\kern-.45\wd1 R}}}}

\def\Zint{{\mathchoice{\setbox1=\hbox{\sf Z}\copy1\kern-.75\wd1\box1}
{\setbox1=\hbox{\sf Z}\copy1\kern-.75\wd1\box1}
{\setbox1=\hbox{\scriptsize\sf Z}\copy1\kern-.75\wd1\box1}
{\setbox1=\hbox{\scriptsize\sf Z}\copy1\kern-.75\wd1\box1}}}
\newcommand{\complex}{ \hbox{\rm C\kern-0.45em\rule[.07em]{.02em}{.58em}%
\kern 0.43em}}

\begin{document}
	%
	\title{Adaptive Learning-Based Task Offloading for Vehicular Edge Computing Systems}


	\author{Yuxuan~Sun,~\IEEEmembership{Student Member,~IEEE,}
		Xueying Guo,~\IEEEmembership{Member,~IEEE,}		
		Jinhui Song,
		Sheng~Zhou,~\IEEEmembership{Member,~IEEE,}
		Zhiyuan Jiang,~\IEEEmembership{Member,~IEEE,}
		Xin Liu,~\IEEEmembership{Fellow,~IEEE,}
		and Zhisheng Niu,~\IEEEmembership{Fellow,~IEEE} 
		\thanks{Y. Sun, J. Song, S. Zhou,  Z. Jiang and Z. Niu are with Beijing National Research Center for Information Science and Technology, Department of Electronic Engineering, Tsinghua University, China. Emails: \{sunyx15, sjh14\}@mails.tsinghua.edu.cn, \{sheng.zhou, zhiyuan, niuzhs\}@tsinghua.edu.cn.}  
		\thanks{X. Guo and X. Liu are with the Department of Computer Science, University of California, Davis, CA, USA. Emails: guoxueying@outlook.com, xinliu@ucdavis.edu.}
		\thanks{This work is sponsored in part by the Nature Science Foundation of China (No. 61871254, No. 91638204, No. 61571265, No. 61861136003, No. 61621091), National Key R\&D Program of China 2018YFB0105005, NSF through grants CNS-1547461, CNS-1718901, IIS-1838207, and Intel Collaborative Research Institute for Intelligent and Automated Connected Vehicles.
		(Corresponding author: Sheng Zhou.)}
		\thanks{Part of this work has been published in IEEE ICC 2018 \cite{Sun2018ICC}.}
	}

	\maketitle

	\begin{abstract}
		The vehicular edge computing (VEC) system integrates the computing resources of vehicles, and provides computing services for other vehicles and pedestrians with task offloading.
		However, the vehicular task offloading environment is dynamic and uncertain, with fast varying network topologies, wireless channel states and computing workloads.
		These uncertainties bring extra challenges to task offloading.
		In this work, we consider the task offloading among vehicles, and propose a solution that enables vehicles to learn the offloading delay performance of their neighboring vehicles while offloading computation tasks.	
		We design an adaptive learning-based task offloading (ALTO) algorithm based on the multi-armed bandit (MAB) theory, in order to minimize the average offloading delay.
		ALTO works in a distributed manner without requiring frequent state exchange, and is augmented with input-awareness and occurrence-awareness to adapt to the dynamic environment.
		The proposed algorithm is proved to have a sublinear learning regret.
		Extensive simulations are carried out under both synthetic scenario and realistic highway scenario, and results illustrate that the proposed algorithm achieves low delay performance, and decreases		
		the average delay up to $30\%$ compared with the existing upper confidence bound based learning algorithm.
		
	\end{abstract}
	
	\begin{IEEEkeywords}
		Vehicular edge computing, task offloading, online learning, multi-armed bandit.
	\end{IEEEkeywords}

	%
	\IEEEpeerreviewmaketitle
	
	\section{Introduction}
	By deploying computing resources at the edge of the network, mobile edge computing (MEC) can provide low-latency, high-reliability computing services for mobile devices \cite{hu2015mobile, shih17}.
	A major problem in MEC is how to perform \emph{task offloading}, i.e., whether or not to offload each task, and how to manage radio and computing resources to execute tasks, which has been widely investigated recently, see surveys \cite{mao2017mobile, mach2017mobile, yu2018survey} and technical papers \cite{you2016energy, chen2016efficient, jin2018}.
	
	  
	To support autonomous driving and a vast variety of on-board infotainment services, vehicles are equipped with substantial computing and storage resources. It is forecast that each self-driving car will have computing power of $10^6$ dhrystone million instructions executed per second (DMIPS) in the near future\cite{intel}, which is tens of times that of the current laptops.
	Vehicles and infrastructures like road side units (RSUs) can contribute their computing resources to the network. This forms the Vehicular Edge Computing (VEC) system \cite{abdel2015vehicle,bitam2015vanet,choo2017sdvc}, that can process computation tasks from vehicular driving systems, on-board mobile devices and pedestrians for various applications.

	In this paper, we focus on the task offloading among vehicles, i.e., the driving systems or passengers of some vehicles generate computation tasks, while some other surrounding vehicles can provide computing services.
	We call the vehicles that require task offloading \emph{task vehicles (TaVs)}, and vehicles who can help to execute tasks \emph{service vehicles (SeVs)}.
	We design a distributed task offloading algorithm to minimize the average delay, where the task offloading decision is made by each TaV individually.

	Multiple SeVs might be available to process each task, and a key challenge is the lack of accurate state information of SeVs in the dynamic VEC environment. The network topology and the wireless channel states vary rapidly due to the movements of vehicles \cite{aibo}, and the computation workloads of SeVs fluctuate across time. These factors are difficult to model or to predict, so that the TaV has no idea \emph{in prior} which SeV performs the best in terms of delay performance.
	
	Our solution is \emph{learning while offloading}, i.e., the TaV is able to learn the delay performance while offloading tasks. To be specific, we adopt the multi-armed bandit (MAB) framework to design our task offloading algorithm \cite{auer2002finite}. The classical MAB problem aims at balancing the exploration and exploitation tradeoff in the learning process: to explore different candidate actions that lead to good estimates of their reward distributions, while to exploit the learned information to select the empirically optimal actions. The upper confidence bound (UCB) based algorithms, such as UCB1 and UCB2, have been proposed with strong performance guarantee \cite{auer2002finite}, and applied to the wireless networks to learn the unknown environments \cite{chen2011opp, shen2016non,sun2017emm}.
	
	However, in our task offloading problem, the movements of vehicles lead to a dynamic candidate SeV set, and the workload of each task is time-varying, leading to a varying cost in exploring the suboptimal actions. These factors have not been addressed by existing MAB schemes, which motivates us to specifically adapt the MAB framework in the vehicular task offloading scenario.	
	Our key contributions include:
	
	1) We propose an adaptive learning-based task offloading (ALTO) algorithm based on MAB theory, in order to guide the task offloading of TaVs and minimize the average offloading delay.	
	ALTO algorithm works in a distributed manner and enables the TaV to learn the delay performance of candidate SeVs while offloading tasks. The proposed algorithm is of low computational complexity, and does not require the exchange of accurate state information like channel states and computing workloads between vehicles, so that it is easy to implement in the real VEC system.
	
	2) Two kinds of \emph{adaptivity} are augmented with the proposed ALTO algorithm: \emph{input-awareness} and \emph{occurrence-awareness}, by adjusting the exploration weight according to the workloads of tasks and the appearance time of SeVs.
	Different from our previous theoretical work \cite{wu2017adaptive} which only considers time-varying workloads of tasks with fixed actions, we consider a more general case with dynamic candidate SeVs (actions), and prove that ALTO can effectively balance the exploration and exploitation in the dynamic vehicular environment with sublinear learning regret.
	%
	
	3) Extensive simulations are carried out under a synthetic scenario, as well as a realistic highway scenario using system level simulator Veins. Results illustrate that our proposed algorithm can achieve low delay performance, and provide guidelines for the settings of key design parameters.
	

	The rest of this paper is organized as follows. We introduce the related work in Section \ref{relatedwork}. The system model and problem formulation is introduced in Section \ref{sys}, and the ALTO algorithm is then proposed in Section \ref{algo}. The learning regret is analyzed in Section \ref{per}. Simulation results are then provided in Section \ref{sim}, and finally comes the conclusions in Section \ref{con}.

\section{Related Work} \label{relatedwork}

\subsection{VEC Architecture and Use Cases}
\begin{figure*}  [!t]
	\centering
	\includegraphics[width=0.85\textwidth]{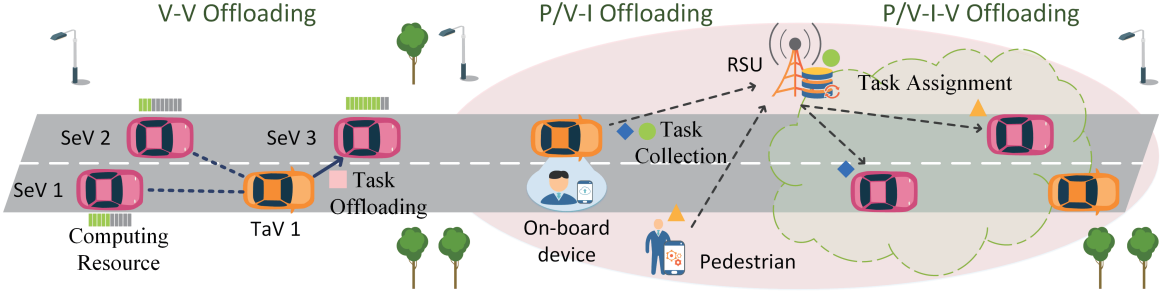}\\  
	\caption{An illustration of the VEC architecture and three major offloading modes.}\label{system}
\end{figure*}

An illustration of the VEC architecture is shown in Fig. \ref{system}. The development of vehicle-to-everything (V2X) communication techniques enable vehicle-to-vehicle (V2V), vehicle-to-infrastructure (V2I) and vehicle-to-pedestrian (V2P) communications, so that tasks can be offloaded to other vehicles through different kinds of routes. Specifically, there are three major offloading modes:

\begin{itemize}		
	\item \emph{Vehicle-Vehicle (V-V) Offloading}: Vehicles directly offload tasks to their surrounding vehicles with surplus computing resources in a distributed manner. In this case, each individual vehicle may not be able to acquire the global state information for task offloading decisions, and there might be no coordinations for task scheduling.
	
	\item \emph{Pedestrian/Vehicle-Infrastructure-Vehicle (P/V-I-V) Offloading}: When there are no other neighboring vehicles for task offloading,
	one solution is that tasks are first offloaded to the infrastructures alongside, and then assigned to other vehicles in a centralized manner.
	
	
	\item \emph{Pedestrian/Vehicle-Infrastructure (P/V-I) Offloading}: In this mode, tasks are offloaded to the infrastructures for direct processing.
	
\end{itemize}

	Similar to the traditional cloud computing services, the VEC system can provide infrastructure as a service (IaaS), platform as a service (PaaS) and software as a service (SaaS) \cite{choo2017sdvc}, and support a wide variety of applications.
For example, cooperative collision avoidance and collective environment perception are necessary for safety driving, where sensing data is generated by a group of vehicles and processed by some of them\cite{5gv2x,zhangshan}. 
In vehicular crowd sensing, the video recordings and images are generated by vehicles and required to be analyzed in real time, in order to supervise the traffic, monitor the road conditions and navigate car parkings \cite{vehcrowd}.
The computing resources of vehicles may be underutilized by the aforementioned vehicular applications \cite{abdel2015vehicle}, which can further provide services for entertainments and multimedia applications, such as cloud gaming, virtual reality, augmented reality and video trans-coding \cite{fengave}.

\subsection{Task Offloading Algorithms}
	
	There are some existing efforts investigating the task scheduling and computing resource management problem in VEC.
	A software-defined VEC architecture is proposed in \cite{choo2017sdvc}.	Inspired by the software-defined network, a centralized controller is designed to periodically collect the state information of vehicles, including mobility and resource occupation, and manage radio and computing resources upon task requests. 
	In terms of P/V-I-V offloading, a semi-Markov decision based centralized task assignment problem is formulated in \cite{zheng2015smdp}, in order to minimize the average system cost by jointly considering the delay of tasks and the energy consumption of mobile devices. 
	Ref. \cite{Jiang2017IoT} further introduces task replication technique to improve the service reliability of VEC, where task replicas can be offloaded to multiple vehicles to be processed simultaneously. 
	However, a key drawback of the centralized framework is that, it requires frequent state information update to optimize the system performance, which is of high signaling overhead. 
	
	An alternative method is to make task offloading decisions by the task generators in a distributed manner. An autonomous vehicular edge framework which enables V-V and V-I offloading is proposed in \cite{fengave}, followed by a task scheduling algorithm based on ant colony optimization. However, when the number of vehicles is large, the computational complexity can be quite high.
	We will design a distributed task offloading algorithm with low complexity.

	\section{System Model and Problem Formulation} \label{sys}
	


	\subsection{V-V Offloading: System Overview}
	We consider V-V offloading in the VEC system, where vehicles involved in the task offloading are classified into two categories: \emph{TaVs} are the vehicles that generate and offload computation tasks for cloud execution, while \emph{SeVs} are the vehicles with sufficient computing resources that can provide computing services. Note that the role of each vehicle depends on the sufficiency of its computing resources, and is not fixed to TaV or SeV during the trip.
			
	TaVs can offload tasks to their neighboring SeVs. Each TaV may have multiple candidate SeVs that can process the tasks, and each task is offloaded to a single SeV and executed by it. 
	As shown in Fig. \ref{system}, for TaV 1, there are 3 candidate SeVs (SeV 1-3), and currently the task is offloaded to SeV 3.
	
	In this work, we design distributed task offloading algorithm to minimize the delay performance, by letting each TaV decide which SeV should serve each task independently, without inter-TaV cooperations. Moreover, we do not make any assumptions on the service disciplines of SeVs, nor the mobility models of vehicles.

	\subsection{Task Offloading Procedure}
	Since offloading decisions are made in a distributed manner, we then focus on a single TaV of interest and model the task offloading problem. Consider a discrete-time VEC system. 
	There are four procedures for task offloading within each time period:
	
	\textbf{SeV discovery:} 
	The TaV discovers neighboring SeVs within its communication range, and selects those in the same moving direction as candidates. Here the driving states of each vehicle, including speed, location and moving direction, can be acquired by other neighboring vehicles through vehicular communication protocols. For example, in dedicated short-range communication (DSRC) standard \cite{kenney2011dsrc}, the periodic beaconing messages can provide these state information. Denote the candidate SeV set in time period $t$ by $\mathcal{N}(t)$, which may change across time since vehicles are moving.  
	And due to the unknown mobility model, candidate SeVs in the future are unknown in prior. Besides, assume that $\mathcal{N}(t) \neq \emptyset$ for $\forall t$, otherwise the TaV can seek help from RSUs along the road, which is beyond the scope of this paper.   
	
	\textbf{Task upload:}
	After updating the candidate SeV set $\mathcal{N}(t)$ at the beginning of each time period, the TaV selects one SeV $n \in \mathcal{N}(t)$ and uploads the computation task. 
	Denote the input data size of the task generated in time period $t$ by $x_t$ (in bits), which is required to be transmitted from TaV to SeV.
	The uplink wireless channel state between TaV and SeV $n \in \mathcal{N}(t)$ is denoted by $h^{(u)}_{t,n}$, and the interference power at SeV $n$ is $I^{(u)}_{t,n}$. 
	We assume that the wireless channel state remains static during the uploading process of each computation task.
	Given the fixed transmission power $P$, channel bandwidth $W$ and noise power $\sigma^2$, the uplink transmission rate $r^{(u)}_{t,n}$ between the TaV and SeV $n$ is   
	\begin{align} \label{uplink_rate}
	r^{(u)}_{t,n} = W\log_2\left(1 + \frac{Ph^{(u)}_{t,n}}{\sigma^2 + I^{(u)}_{t,n}}\right).
	\end{align}
	And the transmission delay $d_\mathrm{up}(t,n)$ of uploading the task to SeV $n$ in time period $t$ is given by      
	\begin{align}  \label{uplink}
	d_\mathrm{up}(t,n) = \frac{x_t }{r^{(u)}_{t,n}}.
	\end{align}

	\textbf{Task execution:} 
	The selected SeV $n$ processes the task after receiving the input data from the TaV. For the task generated in time period $t$, the total workload is given by $x_tw_t$, where $w_t$ is computation intensity (in CPU cycles per bit) representing how many CPU cycles are required to process one bit input data \cite{mao2017mobile}. The computation intensity $w_t$ of the task mainly depends on the nature of applications. 
	
	The computing capability of SeV $n$ is described by its maximum CPU frequency $F_n$ (in CPU cycles per bit), and the allocated CPU frequency to the task of TaV in time period $t$ is denoted by $f_{t,n}$. The SeV may deal with multiple computation tasks simultaneously, and adopt dynamic frequency and voltage scaling (DVFS) technique to dynamically adjust the CPU frequency \cite{zhang2013energy}, and thus we have $f_{t,n}\in [0,F_n]$. 
	We assume that $f_{t,n}$ remains static during each time period $t$, and  
		each computation task can be completed within each time period due to the timely requirements. Tasks of larger workloads can be further partitioned into multiple subtasks \cite{sun2017emm, grundmann2010efficient}, so that each subtask is offloaded to and processed by a SeV within one time period.
	Then the computation delay can be written as	
	\begin{align}
	d_\mathrm{com}(t,n) = \frac{x_t w_t}{f_{t,n}}.
	\end{align}

	\textbf{Result feedback:} 
	Upon the completion of task execution, the selected SeV $n$ transmits back the result to the TaV. 
	Let $h^{(d)}_{t,n}$ denote the downlink wireless channel state, which is assumed to be static during the transmission of each result. The interference at the TaV is denoted by $I^{(d)}_{t}$.
	Similar to \eqref{uplink}, the downlink transmission rate $r^{(d)}_{t,n}$ from SeV $n$ to TaV can be written as
	\begin{align}
	r^{(d)}_{t,n} = W\log_2\left(1 + \frac{Ph^{(d)}_{t,n}}{\sigma^2 + I^{(d)}_{t}}\right).
	\end{align}
	The data volume of the computation result in time period $t$ is denoted by $y_t$ (in bits), and thus the downlink transmission delay from SeV $n$ to the TaV is
	\begin{align}  
	d_\mathrm{dow}(t,n)  = \frac{y_t }{r^{(d)}_{t,n}}.
	\end{align}
	
	Then the sum delay $d_\mathrm{sum}(t,n)$ of offloading the task to SeV $n$ in time period $t$ can be given by
	\begin{align}  
	d_\mathrm{sum}(t,n) = d_\mathrm{up}(t,n)+d_\mathrm{com}(t,n) +d_\mathrm{dow}(t,n).
	\end{align}

	\subsection{Problem Formulation} \label{pro}
	Consider a total number of $T$ time periods. Our objective is to minimize the average offloading delay, by guiding the task offloading decisions of the TaV on which SeV should serve each task. The task offloading problem is formulated as 
	\begin{align} \label{obj}
	\textbf{P1:}~\min_{a_1,...,a_T} \frac{1}{T}\sum_{t=1}^{T}d_\mathrm{sum}(t,a_t),  
	\end{align}	
	where $a_t$ is the optimization variable, which represents the index of SeV selected in time period $t$, with $a_t\in\mathcal{N}(t)$.
	
	\textbf{Availability of state information:}
	The state information related to the delay performance can be classified into two categories based on its ownership: parameters of each task, including the input and output data volumes $x_t$, $y_t$ and computation intensity $w_t$, are known by the TaV upon the generation of  each task. The uplink and downlink transmission rates $r^{(u)}_{t,n}$, $r^{(d)}_{t,n}$ and the allocated CPU frequency $f_{t,n}$ are closely related to the SeV. 
	If all these states are exactly known by the TaV before offloading each task, the sum delay $d_\mathrm{sum}(t,n) $ of SeV $n\in\mathcal{N}(t)$ can then be calculated, and the optimization problem \textbf{P1} is easy to solve with
	\begin{align}
		a_t=\min_{n \in \mathcal{N}_t}d_\mathrm{sum}(t,n).
	\end{align}
	
	However, due to the mobility of vehicles, the transmission rates vary fast across and are difficult to predict. Since there is no cooperation between TaVs, the computation loads at SeVs dynamically change, making the allocated CPU frequency vary across time. Moreover, exchanging these state information between the TaV and all candidate SeVs causes high signaling overhead.	
	Therefore, the TaV may lack the state information of SeVs, and can not realize which SeV provides the lowest delay when making offloading decisions.	
	
	\textbf{Learning while offloading:}
	To overcome the unavailability of the state information of SeVs, we propose the approach \emph{learning while offloading}: the TaV can observe and learn the delay performance of candidate SeVs while offloading computation tasks. Specifically,  the SeV $a_{t}$ in time period $t$ is selected according to the historical delay observations $d(1,a_1), d(2,a_2), ...,d(t-1,a_{t-1})$, without acquiring the exact transmission rates and CPU frequency.
	We aim to design a learning algorithm that minimizes the expectation of offloading delay, written as
	\begin{align} \label{obj2}
	\textbf{P2:}~\min_{a_1,...,a_T}\frac{1}{T}\mathbb{E}\left[\sum_{t=1}^{T}d_\mathrm{sum}(t,a_t)\right].
	\end{align}	
	
	In the rest of the paper, we consider a simplified version of \textbf{P2} by assuming that the input data size $x_t$ of task is time-varying, but the computation intensity $w_t$ and the ratio of output and input data volume $y_t/x_t$ remains constant across time. In practical, this is a valid assumption when tasks are generated by the same type of application. 
	Let $y_t/x_t=\alpha_0$ and $w_t=\omega_0$ for $\forall t$. Then the sum delay of offloading the task to SeV $n$ in time period $t$ can be transformed as
	\begin{align}
		d_\mathrm{sum}(t,n)=x_t\left(\frac{1 }{r^{(u)}_{t,n}}+\frac{\alpha_0}{r^{(d)}_{t,n}}+ \frac{\omega_0}{f_{t,n}}\right).
	\end{align}
	Define the \emph{bit offloading delay} as
	\begin{align}
		u(t,n)=\frac{1 }{r^{(u)}_{t,n}}+\frac{\alpha_0}{r^{(d)}_{t,n}}+ \frac{\omega_0}{f_{t,n}},
	\end{align}
	which represents the sum delay of offloading one bit input data of the task to SeV $n$ in time period $t$. The bit offloading delay $u(t,n)$ reflects the service capability of each candidate SeV, which is what the TaV needs to learn.
	
	Finally, the optimization problem can be written as
	\begin{align} \label{obj3}
	\textbf{P3:}~\min_{a_1,...,a_T}\frac{1}{T}\mathbb{E}\left[\sum_{t=1}^{T}x_tu(t,n)\right].
	\end{align}

%
	
	\section{Adaptive Learning-Based Task Offloading Algorithm}\label{algo}
	In this section, we develop a learning-based task offloading algorithm based on MAB, which enables the TaV to learn the delay performance of candidate SeVs and minimizes the expected offloading delay.
	
	Our task offloading problem \textbf{P3} requires online sequential decision making, which can be solved according to the MAB theory. Each SeV corresponds to an arm whose loss (bit offloading delay) is governed by an unknown distribution. The TaV is the decision maker who tries an arm at a time and learns the estimation of its loss, in order to minimize the expectation of cumulative loss across time.	
	However, the variations of input data size $x_t$ and candidate SeV set $\mathcal{N}_t$  incapacitate existing algorithms of MAB, such as UCB1 and UCB2, in the VEC system. 
	
	In this work, we propose an Adaptive Learning-based Task Offloading (ALTO) algorithm which is aware of both the input data size of tasks and the occurrence of vehicles, as shown in Algorithm 1.
	Parameter $\beta$ is a constant weight, and $k_{t,n}$ records the number of tasks that have been offloaded to SeV $n$ up till time $t$. The occurrence time of SeV $n$ is recorded by $t_n$, and the input data size $x_t$ is normalized to be $\tilde{x}_t$ within $[0,1]$ as:
	\begin{align} \label{normailized_x}
	\tilde{x}_t=\max \left\{ \min \left( \frac{x_t-x^-}{x^+-x^-},1\right), 0\right\},
	\end{align}
	where $x^+$ and $x^-$ are the upper and lower thresholds to normalize $x_t$. In particular, if $x^+=x^-$, $\tilde{x}_t=0$ when $x_t\leq x^-$, and $\tilde{x}_t=1$ when $x_t > x^-$.
	
	In Algorithm 1, Lines 3-5 are the initialization phase, which is called whenever new SeVs occur as candidates. The TaV selects the newly appeared SeV $n$ once and offloads the task, in order to get an initial estimation of its bit offloading delay.
	
	Lines 7-12 are the main loop of the learning process, inspired by the volatile UCB (VUCB) algorithm \cite{bnaya2013social} and the our previous work on opportunistic MAB \cite{wu2017adaptive}. During each time period, the TaV gets the data volume $x_t$ before offloading the task and calculates $\tilde{x}_t$. The utility function defined in \eqref{ada_utility} is used to evaluate the service capability of each SeV, which consists of the empirical bit offloading delay $\bar u_{t,n}$ and a padding function. Specifically, $\bar u_{t,n}$ is the average bit offloading delay of SeV $n$ observed until time period $t$. And the padding function jointly considers the input data size and occurrence time of each SeV, in order to balance the exploration and exploitation in the learning process, and adapt to the dynamic VEC environment.
	The offloading decision is then made according to \eqref{ada_obj}, by selecting the SeV with minimum utility. Finally, the offloading delay is observed upon result feedback, and $\bar{u}_{t,a_t}$ and $k_{t,a_t}$ is updated. 
	
		\begin{algorithm}
			\caption{ALTO: Adaptive Learning-based Task Offloading Algorithm}
			\begin{algorithmic}[1]
				\State \textbf{Input}: parameters $\alpha_0$, $\omega_0$, $\beta$, $x^+$ and $x^-$.
				\For {$t=1,...,T$}
				\If { Any SeV $n \in \mathcal{N}(t)$ has not connected to TaV}
				\State Connect to SeV $n$ once.
				\State Update $\bar u_{t,n}=d_\mathrm{sum}(t,n)/x_t$, $k_{t,n}=1$, $t_n=t$.
				\Else
				\State Observe $x_t$, calculate $\tilde{x}_t$.
				\State Calculate the utility function of each candidate SeV $n \in \mathcal{N}(t)$:
				\begin{align} \label{ada_utility}
				\hat{u}_{t,n}=\bar{u}_{t-1,n}-\sqrt{\frac{\beta(1-\tilde{x}_t)\ln (t-t_n)}{k_{t-1,n}}}.
				\end{align}		
				\State Offload the task to SeV $a_t$, such that:
				\begin{align} \label{ada_obj}
				a_t=\arg\min_{n \in \mathcal{N}(t)} \hat{u}_{t,n}.
				\end{align} 
				\State Observe the sum offloading delay $d_\mathrm{sum}(t,a_t)$.
				\State Update $\bar{u}_{t,a_t}\leftarrow \frac{ \bar{u}_{t-1,a_t}k_{t-1,a_t}+d_\mathrm{sum}(t,a_t)/x_t}{k_{t-1,a_t}+1} $.
				\State Update $k_{t,a_t}\leftarrow k_{t-1,a_t}+1$.
				\EndIf
				\EndFor		
			\end{algorithmic}
		\end{algorithm}
	
	Two kinds of adaptivity of the algorithm are highlighted as follows.

	\textbf{Input-awareness:}
	The input data size $x_t$ can be regarded as a weight factor on the offloading delay. Intuitively, when $x_t$ is small, even if the TaV selects a poorly performed SeV, the sum offloading delay will not be too large. On the other hand, when $x_t$ is large, selecting a SeV with weak service capability brings great delay degradation. Therefore, the padding function is proportional to $\sqrt{1-\tilde{x}_t}$ that is non-increasing as $x_t$ grows, so that ALTO explores more when $x_t$ is small, while exploits more when $x_t$ is large.
	
	\textbf{Occurrence-awareness:}
	The random presences of SeVs are also considered, and the proposed ALTO algorithm has occurrence-awareness. To be specific, for any newly appeared SeV,  $\sqrt{\frac{\ln (t-t_n)}{k_{t-1,n}}}$ is large due to the small number of selections $k_{t-1,n}$, so that ALTO tends to explore more. Meanwhile, ALTO is able to exploit the learned information of any existing SeV, since more times of connections lead to a small value of the padding function.

	\subsection{Complexity}
	 
	In our proposed ALTO algorithm, the computational complexity of calculating the utility functions of all candidate SeVs in Line 8 is $O(N)$, where $N=|\mathcal{N}(t)|$ is the number of candidate SeVs in time period $t$. The task offloading decision made in Line 9 is a minimum seeking problem, with complexity $O(N)$. Updating the empirical bit offloading delay $\bar u_{t,a_t}$  and offloaded times $k_{t,a_t}$ has a complexity of $O(1)$. Therefore, within each time period, the total computational complexity of running ALTO to offload one task is $O(N)$.
	Assume that there are totally $M$ tasks required to be offloaded in the VEC system. Since TaVs offload tasks independently, the total amount of computation is $O(MN)$.
	
	An ant colony optimization based distributed task offloading algorithm is proposed in \cite{fengave}. According to Section V.D, the computational complexity is $O(KM^2N)$, where $K$ is the number of iterations required by the ant colony optimization. Therefore, ALTO is of lower complexity than the existing algorithm in \cite{fengave}.
	
	\subsection{Signaling Overhead}

		Considering the distributed V-V offloading case, the complete-state task offloading (CSTO) policy is that, the TaV obtains the accurate state information of all candidate SeVs, evaluates their delay performance, and selects the SeV with minimum offloading delay.
		Compared with the CSTO policy, our proposed ALTO algorithm is of lower signaling overhead and much easier to implement in the real VEC system.

		First, the uplink and downlink wireless channel states, allocated CPU frequency and interference of each candidate SeV are not required to know by the ALTO algorithm. Therefore, for each TaV, offloading a task can save at least $N$ signaling messages for the state information of the $N$ candidate SeVs, and $MN$ signaling messages can be saved for $M$ tasks.
		Second, when a SeV is serving multiple TaVs simultaneously, the CSTO policy needs to know the task workload of TaVs to allocate computing resources of the SeV. In this case, more signaling messages are generated by the CSTO policy.
		Last but not least, frequent signaling exchange may lead to additional collisions and retransmissions, and the delayed state information may not be accurate.
		The proposed ALTO algorithm enables each TaV to learn the state information of SeVs instead of obtaining them from signaling messages, and thus reduces the signaling overhead.

	\section{Performance Analysis}\label{per}
	In this section, we characterize the delay performance of the proposed ALTO algorithm.
	We adopt the \emph{learning regret} of delay as the performance criteria, which is widely used in the MAB theory. 
	Compared with the existing UCB based algorithms in \cite{auer2002finite}, two major modifications in ALTO are the occurrence time $t_n$ and normalized input $\tilde{x}_t$. We first evaluate their impacts on the learning regret separately, and then jointly analyze these two factors.
	
	\subsection{Definition of Learning Regret}
	Define an \emph{epoch} as the interval during which candidate SeVs remain identical.
	The total number of epochs during the considered $T$ time periods is denoted by $B$, and let $\mathcal{N}_b$ be the candidate SeV set of the $b$th epoch, where $b=1,2,...,B$.
	Let $t_b$ and $t'_b$ be the start and end time of the $b$th epoch, with $t_1=1$ and $t'_B=T$.
	
	For theoretical analysis, we assume that for each SeV $n$, its bit offloading delay $u(t,n)$ is i.i.d. over time and independent of others. We will show in Section \ref{sim} through simulation results that without this assumption, ALTO still works well.
	
	Define the mean bit offloading delay of each candidate SeV $n$ as $\mu_n=\mathbb{E}_t[u(t,n)]$. During each epoch, let $\mu_b^*=\min_{n\in\mathcal{N}_b}\mu_n$ be the optimal bit offloading delay, and $a_b^*=\arg\min_{n\in\mathcal{N}_b}\mu_n$ the index of the optimal SeV. Note that $\mu_b^*$ and $a_b^* $ are unknown in prior.
	
	The learning regret represents the expected cumulative performance loss of sum offloading delay brought by the learning process, which is compared with the genie-aided optimal policy where the TaV always selects the SeV with maximum service capability. The learning regret by time period $T$ can be written as
	\begin{align}
	R_T = \sum_{b=1}^{B} \mathbb{E}\left[\sum_{t=t_b}^{t'_b}x_t\left(u(t,n)-\mu_b^*\right)\right],	
	\end{align}
	
	 In the following subsections, we will characterize the upper regret bound of ALTO algorithm.
	
	\subsection{Regret Analysis under Dynamic SeV Set and Identical Input}	
	We first assume that the input data size is not time-varying, and analyze the learning regret under varying SeV set. Let $x_t=x_0$ for $\forall t$, and $x^+=x^-=x_0$, then $\tilde{x}_t=0$. The utility function \eqref{ada_utility} becomes
	\begin{align}
		\hat{u}_{t,n}=\bar{u}_{t-1,n}-\sqrt{\frac{\beta\ln (t-t_n)}{k_{t-1,n}}},
	\end{align}
	and the learning regret
	\begin{align}
		R_T =x_0 \sum_{b=1}^{B} \mathbb{E}\left[\sum_{t=t_b}^{t'_b}\left(u(t,n)-\mu_b^*\right)\right].
	\end{align}
	
	Also, define the maximum bit offloading delay during the $T$ time periods as $u_m=\sup_{t,n} u(t,n) $, the performance difference between any suboptimal SeV $n\in\mathcal{N}_b$ and the optimal SeV in the $b$th epoch $\delta_{n,b}=(\mu_n-\mu_b^*)/u_m$. Let $\beta=\beta_0u_m^2$, where $\beta_0$ is a constant.
	
	The learning regret within each epoch is upper bounded in Lemma \ref{vucb_epoch}.
	
	\begin{lemma} \label{vucb_epoch}
		Let $\beta_0=2$, the learning regret of ALTO with dynamic SeV set and identical input data size has an upper bound in each epoch. Specifically, in the $b$th epoch:
		\begin{align}
		R_{b}\leq x_0u_m \left[\sum_{n\neq a_b^*}\frac{8\ln (t'_b-t_n)}{\delta_{n,b}} + \left(1+ \frac{\pi^2}{3}\right)\sum_{n\neq a_b^*}\delta_{n,b} \right].
		\end{align}
	\end{lemma}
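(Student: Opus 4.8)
The plan is to adapt the standard UCB1 regret analysis of Auer et al.\ to a single epoch $b$, with the twist that the clock used in the padding function is the local clock $t - t_n$ rather than the global time $t$. Fix the epoch $b$, write $N_b = |\mathcal{N}_b|$, and for each suboptimal SeV $n \neq a_b^*$ let $T_{n,b}$ denote the number of times $n$ is selected during $[t_b, t_b']$. Since $R_b = x_0 \sum_{t=t_b}^{t_b'} \Ex[u(t,a_t) - \mu_b^*] = x_0 \sum_{n\neq a_b^*}(\mu_n - \mu_b^*)\Ex[T_{n,b}] = x_0 u_m \sum_{n\neq a_b^*}\delta_{n,b}\Ex[T_{n,b}]$, the whole problem reduces to bounding $\Ex[T_{n,b}]$ by $\frac{8\ln(t_b'-t_n)}{\delta_{n,b}^2} + 1 + \frac{\pi^2}{3}$, after which the stated bound follows by substitution.

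To bound $\Ex[T_{n,b}]$, I would follow the classical argument. Pick a threshold $\ell_n = \lceil 8\ln(t_b'-t_n)/\delta_{n,b}^2 \rceil$ and write $T_{n,b} \le \ell_n + \sum_{t} \one\{a_t = n,\ k_{t-1,n} \ge \ell_n\}$. For a step $t$ at which $n$ is chosen over $a_b^*$ we must have $\hat u_{t,n} \le \hat u_{t,a_b^*}$, i.e.
\begin{align}
\bar u_{t-1,n} - \sqrt{\frac{\beta \ln(t-t_n)}{k_{t-1,n}}} \;\le\; \bar u_{t-1,a_b^*} - \sqrt{\frac{\beta \ln(t-t_{a_b^*})}{k_{t-1,a_b^*}}}.
\end{align}
This forces at least one of three events: (i) $\bar u_{t-1,a_b^*} > \mu_b^* + \sqrt{\beta \ln(t-t_{a_b^*})/k_{t-1,a_b^*}}$; (ii) $\bar u_{t-1,n} < \mu_n - \sqrt{\beta \ln(t-t_n)/k_{t-1,n}}$; or (iii) $\mu_n < \mu_b^* + 2\sqrt{\beta \ln(t-t_n)/k_{t-1,n}}$. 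Events (i) and (ii) are concentration failures: since $u(t,n)/u_m \in [0,1]$ is i.i.d., Hoeffding's inequality gives that each occurs with probability at most $(t-t_n)^{-2\beta_0}$ for a fixed value of the count, and summing over the at most $t$ possible count values and over $t$ yields a total contribution of order $\sum_{s\ge 1} s \cdot s^{-2\beta_0} = \sum_s s^{1-2\beta_0}$; with $\beta_0 = 2$ this is $\sum_s s^{-3} \le \pi^2/6$ for each of (i) and (ii), giving $\pi^2/3$ in total. Event (iii), with $k_{t-1,n} \ge \ell_n$ and $\beta = \beta_0 u_m^2 = 2u_m^2$, is impossible by the choice of $\ell_n$: it would require $\delta_{n,b} u_m < 2\sqrt{2u_m^2 \ln(t-t_n)/\ell_n} \le 2\sqrt{2u_m^2 \ln(t_b'-t_n)/\ell_n}$, contradicting $\ell_n \ge 8\ln(t_b'-t_n)/\delta_{n,b}^2$ (here I use $t - t_n \le t_b' - t_n$ within the epoch). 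Collecting terms, $\Ex[T_{n,b}] \le 8\ln(t_b'-t_n)/\delta_{n,b}^2 + 1 + \pi^2/3$.

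The main obstacle I anticipate is handling the local clock $t - t_n$ cleanly in the concentration step: unlike in vanilla UCB1 where the confidence radius uses the global time, here the sum over failure probabilities is in terms of $(t-t_n)$, so I must be careful that for a fixed SeV $n$ the index $t-t_n$ still ranges over distinct integers $1,2,\dots$ as $t$ ranges over the epoch, which lets the union bound and the $\sum s^{-3}$ series go through exactly as in the classical proof; the monotone replacement of $\ln(t-t_n)$ by $\ln(t_b'-t_n)$ in event (iii) needs the (harmless) observation that $t \le t_b'$. A secondary point to check is that the optimal arm's own local clock $t - t_{a_b^*}$ appears in event (i); since $t - t_{a_b^*} \ge 1$ after $a_b^*$'s initialization and the padding is only used to \emph{upper}-bound the deviation, replacing it by $t - t_n$ or handling it symmetrically causes no difficulty. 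Everything else — the reduction of $R_b$ to $\sum \delta_{n,b}\Ex[T_{n,b}]$, the Hoeffding bounds, and the final arithmetic — is routine.
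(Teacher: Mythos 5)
Your proposal is correct and follows essentially the same route as the paper: the paper performs the identical reduction $R_b = x_0 u_m \sum_{n\neq a_b^*}\delta_{n,b}\,\mathbb{E}[k_{n,b}]$ and then simply cites Lemma~1 of the volatile-UCB paper together with Theorem~1 of Auer et al.\ for the bound $\mathbb{E}[k_{n,b}]\le 8\ln(t'_b-t_n)/\delta_{n,b}^2 + 1 + \pi^2/3$, whereas you re-derive that cited bound explicitly (threshold $\ell_n$, three-event decomposition, Hoeffding with the local clock $t-t_n$). The details you fill in, including the monotone replacement $\ln(t-t_n)\le\ln(t'_b-t_n)$ and the convergent series accounting, are sound and consistent with the stated constants.
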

	\begin{proof}
		See Appendix \ref{a1}.
	\end{proof}
	
	Then we have the following Theorem \ref{vucb_T} that provides the upper bound of the learning regret over $T$ time periods.
	
	\begin{theorem} \label{vucb_T}
		Let $\beta_0=2$. For a given time horizon $T$, the total learning regret $R_T$ of ALTO dynamic SeV set and identical input data size has an upper bound as follows:
		\begin{align}
		R_{T}\leq x_0u_m\sum_{b=1}^{B} \left[\sum_{n\neq a_b^*}\frac{8\ln T}{\delta_{n,b}} + O(1) \right].
		\end{align}
	\end{theorem}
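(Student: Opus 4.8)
The plan is to derive Theorem \ref{vucb_T} directly from the per-epoch bound of Lemma \ref{vucb_epoch} by summing over the $B$ epochs and then coarsening the logarithmic factor; this is the routine ``local-to-global'' step of UCB-type regret analyses, so essentially all of the substantive work is already contained in Lemma \ref{vucb_epoch} and what remains is bookkeeping.

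First I would use the definition of the learning regret, which is itself written as a sum over epochs, to decompose $R_T = \sum_{b=1}^{B} R_b$, where $R_b = \mathbb{E}\big[\sum_{t=t_b}^{t'_b} x_0\,(u(t,n)-\mu_b^*)\big]$ is precisely the quantity bounded in Lemma \ref{vucb_epoch}. Substituting that bound gives
\[
R_T \le x_0 u_m \sum_{b=1}^{B}\left[\sum_{n\neq a_b^*}\frac{8\ln(t'_b-t_n)}{\delta_{n,b}} + \left(1+\frac{\pi^2}{3}\right)\sum_{n\neq a_b^*}\delta_{n,b}\right].
\]
Next I would loosen the logarithmic term: since the occurrence time satisfies $t_n \ge 1$ and every epoch ends no later than the horizon, i.e.\ $t'_b \le T$, we have $t'_b - t_n \le T$ and hence $\ln(t'_b - t_n) \le \ln T$ for all $b$ and all $n \neq a_b^*$. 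Finally, each normalized gap obeys $0 < \delta_{n,b} \le 1$ and the number of candidate SeVs in any epoch is bounded (by the number of vehicles within communication range), so $(1+\pi^2/3)\sum_{n\neq a_b^*}\delta_{n,b}$ is a constant independent of $T$, which may be absorbed into an $O(1)$ term. Collecting these estimates yields
\[
R_T \le x_0 u_m \sum_{b=1}^{B}\left[\sum_{n\neq a_b^*}\frac{8\ln T}{\delta_{n,b}} + O(1)\right],
\]
which is the claim.

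I do not anticipate a genuine obstacle at this stage; the two points requiring care are (i) checking that the decomposition $R_T = \sum_b R_b$ lines up term by term with Lemma \ref{vucb_epoch}, in particular that the $u(t,n)$ inside the expectation refers to the SeV actually selected at time $t$, so that $R_b$ is literally the $b$th epoch's regret, and (ii) making explicit that the hidden constant in $O(1)$ depends only on $\{\delta_{n,b}\}$ and on $|\mathcal{N}_b|$ and is uniform in $T$, so that the per-epoch rate stays logarithmic in $T$. The real content --- the exploration/exploitation balance and the concentration bound controlling how many times each suboptimal SeV is selected --- is entirely inside Lemma \ref{vucb_epoch} and is not revisited here.
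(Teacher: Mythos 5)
Your proposal is correct and follows essentially the same route as the paper's own proof: apply Lemma \ref{vucb_epoch} epoch by epoch, bound $\ln(t'_b - t_n)$ by $\ln T$ using $t'_b \le T$, absorb the gap-dependent constant into $O(1)$, and sum over the $B$ epochs. The two points of care you flag (the epoch-wise decomposition of the regret and the $T$-uniformity of the $O(1)$ term) are exactly the right ones, and both hold here.
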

	\begin{proof}
		See Appendix \ref{a3}.
	\end{proof}
	
	Theorem \ref{vucb_T} implies that, our proposed ALTO algorithm provides a sublinear learning regret compared to the genie-aided optimal policy. 
	To be specific, within each epoch, the learning regret is governed by $O(\ln T)$, and inversely proportional to the performance difference $\delta_{n,b}$ of optimal SeV and suboptimal SeV $n\neq a_b^*$.
	Moreover, for any finite time horizon $T$ with $B$ epochs, ALTO achieves $O(B\ln T)$ learning regret.
	
\begin{remark}
		The random appearance and disappearance of SeVs affect the number of epochs $B$ and the learning regret $O(B\ln T)$. Within a fixed number of time periods, higher randomness of SeVs results in a more dynamic environment, and thus higher learning regret.
	\end{remark}
		
	\begin{remark}
		To prove Lemma \ref{vucb_epoch} and Theorem \ref{vucb_T}, we have to normalize the bit offloading delay $u(t,n)$ within $[0,1]$ for $\forall t, n$, by setting $u_m=\sup_{t,n} u(t,n) $. In practical, the exact value of $u_m$ is not easy to acquire in prior. Instead, $u_m$ can be set to the maximum $u(t,n)$ that has been observed till the current time period. 
	\end{remark}
	
	\subsection{Regret Analysis under Varying Input and Fixed Candidate SeVs}
	We then characterize the upper bound of the learning regret within a single epoch, and consider that the input data size $x_t$ is random and continuous.
	Let $B=1$. The optimal SeV is $a^*= \arg\min_{n\in\mathcal{N}_1}\mu_n$, and its mean bit offloading delay $\mu^*=\min_{n\in\mathcal{N}_1}\mu_n$.
	The learning regret can be simplified as
	\begin{align}
		R_T =\mathbb{E}\left[\sum_{t=1}^{T}x_t(u(t,n)-\mu^*)\right].
	\end{align}
	
	The following theorem bounds the learning regret under varying input data size and fixed candidate SeV set.
	
	\begin{theorem} \label{ada}
		Let $\beta_0=2$, and $\mathbb{P}\{x_t\leq x^-\}>0$. For any finite time horizon $T$, we have:
		
		(1)	When $x^+\geq	x^-$, the expected number of tasks $k_{T,n}$ offloaded to any SeV $n\neq a^*$ can be bounded as 
		\begin{align}\label{ada1}
		\mathbb{E}[k_{T,n}] \leq\frac{8\ln T }{\delta^2_{n}} +O(1).   
		\end{align}
		
		(2) With $x^+=	x^-$, the learning regret can be bounded as
		\begin{align}\label{ada2}
		R_T \leq u_m\sum_{n\neq a^*}\left[ \frac{8\ln T \mathbb{E}[x_t|x_t\leq x^-]}{\delta_{n}} +O(1)    \right],
		\end{align}
		where $\mathbb{E}[x_t|x_t\leq x^-]$ is the expectation of $x_t$ on the condition that $x_t\leq x^-$, $u_m=\sup_{t,n} u(t,n) $, and $\delta_{n}=(\mu_n-\mu^*)/u_m$.
	\end{theorem}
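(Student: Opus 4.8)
The plan is to retrofit the UCB1 regret analysis of \cite{auer2002finite} (as extended to volatile arms in \cite{bnaya2013social} and to load-dependent exploration in \cite{wu2017adaptive}) to the input-aware padding of ALTO. Normalize every observed bit delay by $u_m$ so that it lies in $[0,1]$, and write $c_{t,m}=u_m\sqrt{2\ln(t-t_m)/k_{t-1,m}}$ for the padding radius that ALTO uses \emph{when} $\tilde x_t=0$ (this is exactly $\sqrt{\beta\ln(t-t_m)/k_{t-1,m}}$ since $\beta=2u_m^2$, i.e.\ $\beta_0=2$). The only structural novelty relative to UCB1 is that on a general round the radius is $\sqrt{1-\tilde x_t}\,c_{t,m}\le c_{t,m}$, i.e.\ the padding can only be \emph{damped}; the whole argument is about showing that damping cannot inflate the number of selections of a suboptimal SeV.

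\emph{Part (1).} Fix $n\neq a^*$ and put $\ell=\lceil 8\ln T/\delta_n^2\rceil$. As in UCB1, after the single forced initialization selection we have $k_{T,n}\le\ell+\sum_{t}\mathbf 1\{a_t=n,\ k_{t-1,n}\ge\ell\}$, so it suffices to bound the expected size of the second sum by $O(1)$. Rearranging the selection rule $\hat u_{t,n}\le\hat u_{t,a^*}$ gives $\bar u_{t-1,n}-\bar u_{t-1,a^*}\le\sqrt{1-\tilde x_t}\,(c_{t,n}-c_{t,a^*})$. If $c_{t,n}\ge c_{t,a^*}$, the right side is at most $c_{t,n}-c_{t,a^*}$, so the event implies the \emph{exact} UCB1 selection condition $\bar u_{t-1,n}-c_{t,n}\le\bar u_{t-1,a^*}-c_{t,a^*}$, and the textbook trichotomy applies: one of $\{\bar u_{t-1,a^*}\ge\mu^*+c_{t,a^*}\}$, $\{\bar u_{t-1,n}\le\mu_n-c_{t,n}\}$, $\{\mu_n-\mu^*<2c_{t,n}\}$ must hold; the last is impossible once $k_{t-1,n}\ge\ell$, and the first two have probability $O((t-t_{a^*})^{-4})$ and $O((t-t_n)^{-4})$ by Chernoff--Hoeffding (this is where $\beta_0=2$ is used). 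If instead $c_{t,n}<c_{t,a^*}$, the right side is $\le 0$, so the event forces $\bar u_{t-1,n}\le\bar u_{t-1,a^*}$; since $k_{t-1,n}\ge\ell$ makes a deviation of $\bar u_{t-1,n}$ below $\mu_n$ by $\delta_n u_m/2$ have probability $\le T^{-4}$, it remains to control $\{\bar u_{t-1,a^*}\ge\mu^*+\delta_n u_m/2\}$, which I replace by the $\sqrt{\ln}$-scaled rare event $\{\bar u_{t-1,a^*}\ge\mu^*+c_{t,a^*}\}$ together with the deterministic remainder $\mu_n-\mu^*<2c_{t,a^*}$ — and here I invoke that $\mathbb P\{x_t\le x^-\}>0$ forces a positive fraction of rounds to have $\tilde x_t=0$, which drives $k_{t-1,a^*}$ to be linear in $t$ on all but $O(1)$ rounds, so the deterministic remainder holds only finitely often. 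Summing all per-round failure probabilities over $k_{t-1,n}\in[\ell,t]$ (and over $k_{t-1,a^*}$ where it appears) and over $t\le T$ yields convergent series, hence $\mathbb E[k_{T,n}]\le\ell+O(1)$, which is \eqref{ada1}.

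\emph{Part (2).} With $x^+=x^-$ one has $\tilde x_t\in\{0,1\}$, so I call round $t$ an \emph{exploration round} if $x_t\le x^-$ (then $\hat u_{t,m}=\bar u_{t-1,m}-c_{t,m}$, ordinary UCB1) and an \emph{exploitation round} if $x_t>x^-$ (then $\hat u_{t,m}=\bar u_{t-1,m}$, a purely greedy choice not depending on the value of $x_t$). Since $u(t,\cdot)$ is independent of the past while $a_t$ is measurable with respect to the past together with $x_t$, $R_T=u_m\sum_{n\neq a^*}\delta_n\sum_t\mathbb E[x_t\mathbf 1\{a_t=n\}]$, and I split each inner term by the type of round. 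On exploration rounds $x_t$ and $\mathbf 1\{a_t=n\}$ decouple given the past (only the inequality $x_t\le x^-$ enters the decision), producing the factor $\mathbb E[x_t\mid x_t\le x^-]$ times the expected number of exploration-selections of $n$, which is at most $\mathbb E[k_{T,n}]\le 8\ln T/\delta_n^2+O(1)$ by Part (1); this is the $\ln T$ term. On exploitation rounds, selecting $n$ requires $\bar u_{t-1,n}\le\bar u_{t-1,a^*}$; since the exploration rounds (occurring with probability $\mathbb P\{x_t\le x^-\}>0$) both make $k_{t-1,a^*}$ linear in $t$ and — via the UCB re-selection mechanism, which keeps choosing $n$ whenever its radius $c_{t,n}$ still exceeds $\delta_n u_m$ — force $k_{t-1,n}\gtrsim 2\ln t/\delta_n^2$ beyond a $T$-independent transient, this event has probability $O(t^{-2})$ there; hence $\sum_t\mathbb E[x_t\mathbf 1\{a_t=n,\,x_t>x^-\}]\le\mathbb E[x_t]\sum_t O(t^{-2})=O(1)$. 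Combining the two pieces gives \eqref{ada2}.

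\emph{Where the difficulty lies.} The routine parts are the trichotomy and the Hoeffding tail bounds; the genuinely new work is controlling selections of $n$ on rounds where the padding is damped or switched off ($\tilde x_t>0$, especially $\tilde x_t=1$), because there the radius no longer shields against a mildly unlucky empirical mean and the clean $t^{-4}$ bound is lost. The fix — used in both parts — is to prove that the optimal SeV is sampled a linear number of times and, for Part (2), that the suboptimal SeV is itself sampled at least $\sim\! 2\ln t/\delta_n^2$ times, so that even a radius-free comparison fails with summable probability; establishing these \emph{count} bounds without circularity (the counts are generated by the very decisions they must control) calls for a short bootstrap over time blocks — first accuracy of $\bar u_{t-1,a^*}$, then dominance of $a^*$ on most rounds, then linearity of $k_{t-1,a^*}$ — and this bootstrap, together with bounding the finite transient before it takes effect, is the main obstacle.
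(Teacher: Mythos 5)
Your outline takes a genuinely different route from the paper: the paper's proof of Theorem~\ref{ada} is a two-line reduction --- after normalizing $u(t,n)$ by $u_m$ and rewriting the arg-min as an arg-max over indices in $[0,1]$, it maps \textbf{P3} onto the opportunistic-bandit problem of \cite{wu2017adaptive} and imports Lemma~7 and Theorem~3 of that paper wholesale. You instead attempt a self-contained adaptation of the UCB1 analysis of \cite{auer2002finite}, and your case split (padding radius merely damped vs.\ effectively switched off) correctly isolates where the standard argument breaks.

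However, there is a genuine gap, and it sits exactly where you place the ``main obstacle'': the count lower bounds are asserted, not proved. In Part~(1), Case~B ($c_{t,n}<c_{t,a^*}$, equivalently $k_{t-1,n}>k_{t-1,a^*}$, so the optimal arm may be the \emph{under}-sampled one), the residual event $\{\bar u_{t-1,a^*}\ge\mu^*+\delta_n u_m/2\}$ has constant probability whenever $k_{t-1,a^*}$ is small, so everything hinges on showing $k_{t-1,a^*}\gtrsim 8\ln t/\delta_n^2$ on all but summably many rounds. You derive this from ``exploration rounds occur with positive frequency, hence $k_{t-1,a^*}$ is linear in $t$,'' but that claim is itself circular (it presupposes that suboptimal arms are rarely chosen, which is the statement being proved) and, even once a bootstrap is set up, the transient before $\rho^- t$ dominates $\sum_m 8\ln T/\delta_m^2$ has length $\Theta(\ln T)$; during that transient the per-round failure probability in Case~B is not summable, which threatens an additional $O(\ln T)$ term and would spoil the leading constant $8/\delta_n^2$ in \eqref{ada1}. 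The same unproven machinery is invoked twice more in Part~(2): the $O(t^{-2})$ bound on greedy mis-selections needs both the linear growth of $k_{t-1,a^*}$ \emph{and} the lower bound $k_{t-1,n}\gtrsim 2\ln t/\delta_n^2$ enforced by the ``re-selection mechanism,'' neither of which is established (the latter is delicate because $n$ is only re-selected on exploration rounds when its lower-confidence index undercuts that of $a^*$, a random event). These are precisely the lemmas that \cite{wu2017adaptive} devotes its Lemma~7, Theorem~3 and Appendix~C.2 to proving; without them, or without the paper's reduction to them, the argument is a plausible skeleton rather than a proof. The routine parts (the trichotomy, the Chernoff--Hoeffding tails with $\beta_0=2$, and the decoupling of $x_t$ from $a_t$ on exploration rounds yielding the factor $\mathbb{E}[x_t\mid x_t\le x^-]$ in \eqref{ada2}) are fine.
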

	\begin{proof}
		See Appendix \ref{a2}.
	\end{proof}
	
	According to Theorem \ref{ada}, the time order of the learning regret is $O(\ln T)$, indicating that under time-varying input data volume, the TaV is still able to learn which SeV performs the best, and achieves a sublinear deviation compared to the genie-aided optimal policy.
	
	Recall that compared to the existing UCB based algorithms, the major modification under varying input is the introduction of normalized input $\tilde{x}_t$, which dynamically adjusts the weight of exploration and exploitation. 
	As shown in \eqref{ada2}, the consideration of $\tilde{x}_t$ brings an coefficient $\mathbb{E}[x_t|x_t\leq x^-]$ to the learning regret.
	When the input data size is fixed to $x_0$, the coefficient of the learning regret of conventional UCB algorithms is $x_0$. Therefore, by properly selecting the lower threshold $x^-$, we have $\mathbb{E}[x_t|x_t\leq x^-]<x_0$. This implies that the proposed ALTO algorithm can take the opportunity to explore when $x_t$ is small, and achieve lower learning regret. 
	
	Moreover, when the task offloading scenario is simplified to the case with fixed candidate SeVs and identical input data size, the proposed ALTO algorithm reduces to a conventional UCB algorithm, and the lower bound of the learning regret has been investigated in \cite{salomon2011, bubeck2010,bubeck2012}, which is provided in Appendix \ref{lowerbound}. Specifically, the regret lower bound of conventional UCB algorithms is $x_0 u_m \sum_{n\neq a^*}\frac{\delta_n \ln T}{D(n,a^*)}$, where $D(n,a^*)$ is the Kullback-Leibler divergence of the bit offloading delay distributions. Therefore, in the case with varying input, the regret upper bound of ALTO is even possible to be smaller than the lower bound of conventional UCB algorithms, due to the input-awareness.

	\subsection{Joint Consideration of Occurrence-awareness and Input-awareness}
    Finally, we analyze the learning regret by jointly considering the occurrence of vehicles and the variations of input data size. 
    Although these two factors are independent with each other, they actually couple together in the utility function \eqref{ada_utility}, and collectively balance the exploration and exploitation in the learning process. Therefore, it is quite difficult to derive the upper bound of the learning regret in this case.
    
    We study a special case with periodic input and fixed bit offloading delay, and derive the theoretical upper bound to provide some insights. 
	To be specific, assume that the input data size $x_t=\epsilon_0$ when $t$ is even, and $x_t=1- \epsilon_1$ when $t$ is odd, where $\epsilon_0,\epsilon_1\in [0,0.5)$. 
	Let $x^+=1$, and $x^-=\epsilon_0$, thus $\tilde{x}_t=0$ when $x_t=\epsilon_0$, and $\tilde{x}_t=1-\frac{\epsilon_1}{1-\epsilon_0}$ when $x_t=1-\epsilon_1$
	Consider two SeVs appear at $t_1$ and $t_2$ respectively, and $t_1\neq t_2$. 
	Then there are $2$ epochs during $T$ time periods, and we only need to focus on the second epoch, since the first epoch only has one SeV available.
	The bit offloading delay of each SeV is fixed, with $u(t,n)=\mu_n$ for $\forall t, n=1,2$, but unknown in prior.
	Without loss of generality, let $\mu_1\leq\mu_2$, and $\Delta=(\mu_2-\mu_1)/\mu_2$.
	
	The learning regret can be written as
	\begin{align}
		R_T&=\mathbb{E}\left[\sum_{\max\{t_1,t_2\}}^{T}(u(t,n)-\mu_1)\right] \nonumber\\
		&=(\mu_2-\mu_1)\mathbb{E}\left[k^{(2)}_{T,2}\right],
	\end{align}
	where $k^{(2)}_{T,2}$ represents how many times SeV 2 is selected in the second epoch.

    The upper bound for learning regret of ALTO algorithm under periodic input and fixed bit offloading delay is given in the following theorem.
    
    \begin{theorem}\label{t3}
    	Let $\beta_0=2$. With periodic input data size and fixed bit offloading delay, we have:
%
%
    	\begin{align}\label{t3_regret}
    		R_T \leq \frac{2\mu_2\epsilon_0\ln T}{\Delta}+O(1).
    	\end{align}
    \end{theorem}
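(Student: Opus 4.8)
The plan is to exploit the fact that each SeV's bit offloading delay is a constant, $u(t,n)=\mu_n$: then the update in Line~11 keeps $\bar u_{t,n}=\mu_n$ for all $t$ after SeV $n$ has been connected once, so the empirical means carry no randomness and the whole execution of ALTO in the second epoch is deterministic. In particular $u_m=\mu_2$, $\beta=2\mu_2^2$, $\delta_2=(\mu_2-\mu_1)/\mu_2=\Delta$, and, since the only suboptimal arm is SeV $2$ and the regret carries the weight $x_t$, the regret decomposes by the parity of $t$:
\[
R_T=(\mu_2-\mu_1)\bigl(\epsilon_0\,N_{\mathrm e}+(1-\epsilon_1)\,N_{\mathrm o}\bigr),
\]
where $N_{\mathrm e}$ and $N_{\mathrm o}$ count the even and odd time periods of the second epoch at which SeV $2$ is selected (so $k^{(2)}_{T,2}=N_{\mathrm e}+N_{\mathrm o}$). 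It therefore suffices to prove $N_{\mathrm e}\le 2\Delta^{-2}\ln T+O(1)$ and $N_{\mathrm o}=O(1)$.

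First I would bound $N_{\mathrm e}$ by the usual UCB argument, made exact here. SeV $2$ is chosen at an even $t$ only if $\hat u_{t,2}\le\hat u_{t,1}$; substituting $\bar u_{t-1,n}=\mu_n$, using $\tilde x_t=0$ at even $t$, and discarding the nonnegative padding of SeV $1$, this forces $\mu_2-\sqrt{\beta\ln(t-t_2)/k_{t-1,2}}\le\mu_1$, i.e. $k_{t-1,2}\le\beta\ln(t-t_2)/(\mu_2-\mu_1)^2=2\Delta^{-2}\ln(t-t_2)\le 2\Delta^{-2}\ln T$. As $k_{\cdot,2}$ is non-decreasing and increases by one at each such selection, $N_{\mathrm e}\le 2\Delta^{-2}\ln T+O(1)$; equivalently, once $k_{t-1,2}$ exceeds $2\Delta^{-2}\ln T$, SeV $2$ is never again selected at an even period.

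The crux is $N_{\mathrm o}=O(1)$, and this is the step I expect to be the main obstacle, because it is exactly where input-awareness (the weight $1-\tilde x_t$) and occurrence-awareness (the $\ln(t-t_n)$ term) interact. At an odd $t$ the exploration weight is only $1-\tilde x_t=\epsilon_1/(1-\epsilon_0)=:c\in(0,1)$, so the same computation shows SeV $2$ is chosen at odd $t$ only if $k_{t-1,2}\le 2c\,\Delta^{-2}\ln(t-t_2)$. The plan is to show that the even-period dynamics push $k_{t-1,2}$ up to the curve $2\Delta^{-2}\ln(t-t_2)$ within a number of periods that depends only on $\Delta,\mu_1,\mu_2$ and not on $T$: because SeV $1$'s padding is $O(\sqrt{\ln t/t})\to0$, SeV $2$ is in fact selected at every even period whose $k_{t-1,2}$ sits a little below that curve, and the curve grows only logarithmically while $k$ grows linearly in the count of such periods, so $k$ catches up after $O(\Delta^{-2}\ln(1/\Delta))$ steps. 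Once $k_{t-1,2}$ is within $O(1)$ of $2\Delta^{-2}\ln(t-t_2)$ it strictly exceeds $2c\,\Delta^{-2}\ln(t-t_2)$ as soon as $\ln(t-t_2)>\tfrac{\Delta^2}{2(1-c)}$ — again a $T$-independent threshold — so no odd-period selection of SeV $2$ is possible beyond that point, and $N_{\mathrm o}$ is bounded by the length of this initial transient, hence $O(1)$ in $T$. A convenient packaging is a monotonicity lemma: for $t$ past the threshold, if SeV $1$ is chosen at even $t-1$ then SeV $1$ is again chosen at odd $t$, since passing from $t-1$ to $t$ multiplies SeV $2$'s padding by $\sqrt c<1$ while moving SeV $1$'s already negligible padding by $o(1)$; then every odd selection of SeV $2$ past the threshold would be preceded by an unbroken run of SeV-$2$ selections reaching back before the threshold, impossible by the $N_{\mathrm e}$ bound.

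Finally, plugging $N_{\mathrm e}\le 2\Delta^{-2}\ln T+O(1)$ and $N_{\mathrm o}=O(1)$ into the decomposition and using $\mu_2-\mu_1=\mu_2\Delta$,
\[
R_T\le(\mu_2-\mu_1)\,\epsilon_0\,\frac{2\ln T}{\Delta^2}+O(1)=\frac{2\mu_2\epsilon_0\ln T}{\Delta}+O(1),
\]
as claimed. The delicate point, as noted, is the $T$-uniform control of how quickly $k_{t-1,2}$ tracks the even-period exploration curve: the asymptotic statements above ("SeV $1$'s padding $\to0$", "$k$ catches up") have to be replaced by explicit inequalities valid from the start of the second epoch onward, and one must check the edge cases where $\ln(t-t_2)$ is still small (e.g. the first few periods after $\max\{t_1,t_2\}$), which only contribute to the $O(1)$ term.
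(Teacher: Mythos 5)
Your proposal is correct and follows essentially the same route as the paper: bound the total number of selections of SeV~2 by the deterministic UCB argument (the paper's Lemma~3), then show that selections at large-input (odd) periods cease after a $T$-independent time because the even-period dynamics drive $k_{t-1,2}$ up to the exploration curve while the weight $\epsilon_1/(1-\epsilon_0)<1$ shrinks the odd-period padding (the paper's Lemma~4, which formalizes your ``catch-up'' step via an explicit auxiliary lower-bounding function $f(t)$ on $k_{2t,2}$). The remaining work you flag --- replacing the asymptotic statements by explicit $T$-uniform inequalities --- is exactly what that auxiliary-function construction supplies.
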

   	\begin{proof}
   		See Appendix \ref{a4}.
   	\end{proof}
   	
   	The learning regret in \eqref{t3_regret} indicates that, when jointly considering the time-varying feature of input data size and candidate SeV set, the proposed ALTO algorithm still achieves $O(\ln T)$ regret, and focuses on the exploration only when the input is low ($x_t=\epsilon_0$).
   	
   	\begin{conjecture}
   		The proposed ALTO algorithm with random continuous input data size and dynamic SeV set achieves $O(B\ln T)$ learning regret.
   	\end{conjecture}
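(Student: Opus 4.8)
\emph{A proof strategy.} The natural route is to reuse the epoch decomposition behind Theorem~\ref{vucb_T} and to replace the per-epoch estimate of Lemma~\ref{vucb_epoch} with an input-aware one in the spirit of Theorem~\ref{ada}. Write $R_T=\sum_{b=1}^{B}R_b$ with $R_b=\mathbb{E}\big[\sum_{t=t_b}^{t'_b}x_t\big(u(t,a_t)-\mu_b^*\big)\big]$. Inside the $b$-th epoch the candidate set $\mathcal{N}_b$ is frozen, the occurrence offset enters the padding term of~\eqref{ada_utility} only through $\ln(t-t_n)\le\ln T$, and the counters $k_{t,n}$ and offsets $t_n$ are simply inherited from earlier epochs exactly as in Appendix~\ref{a3} (inherited samples only tighten the bonus, so they help). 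Hence it suffices to prove a per-epoch bound of the form $R_b\le u_m\sum_{n\neq a_b^*}\big(c\,\delta_{n,b}^{-1}\ln T+O(1)\big)$ with $c$ independent of $b$ and $T$; summing over the $B$ epochs then reproduces the claimed $O(B\ln T)$.

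For a single epoch I would push the standard UCB decomposition through~\eqref{ada_utility} and~\eqref{ada_obj}: with $\beta=2u_m^2$, a pull $a_t=n$ of a suboptimal SeV $n\in\mathcal{N}_b$ at time $t$ forces at least one of (a) $\bar u_{t-1,a_b^*}>\mu_b^*+\sqrt{\beta(1-\tilde x_t)\ln(t-t_{a_b^*})/k_{t-1,a_b^*}}$; (b) $\bar u_{t-1,n}<\mu_n-\sqrt{\beta(1-\tilde x_t)\ln(t-t_n)/k_{t-1,n}}$; (c) $k_{t-1,n}\le 4\beta(1-\tilde x_t)\ln(t-t_n)/(\mu_n-\mu_b^*)^2\le 8\ln T/\delta_{n,b}^2$. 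Case (c) triggers at most $\lceil 8\ln T/\delta_{n,b}^2\rceil$ pulls, and — this is what reproduces the input-aware coefficient of~\eqref{ada2} — it requires $1-\tilde x_t>0$, i.e.\ $x_t<x^+$ (indeed it is easier precisely when $x_t$ is small), so the expected cost of these pulls is controlled by the conditional mean of $x_t$ on $\{x_t<x^+\}$, exactly as the proof of Theorem~\ref{ada} turns it into the coefficient $\mathbb{E}[x_t\mid x_t\le x^-]$ of~\eqref{ada2}. Cases (a) and (b) are the tail events, for which Hoeffding's inequality with $u(t,n)\in[0,u_m]$ gives probability at most $(t-t_n)^{-4(1-\tilde x_t)}$; one then wants to sum these over $t$ after averaging over the i.i.d.\ input, as in Appendices~\ref{a1} and~\ref{a2}.

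That last summation is the real obstacle, and the reason the statement is posed as a conjecture. Because $1-\tilde x_t$ and $\ln(t-t_n)$ multiply inside the radical of~\eqref{ada_utility}, when $\tilde x_t$ is close to $1$ the padding term collapses no matter how few times an arm has been sampled, so the self-correcting mechanism of UCB — a rarely sampled optimal arm having a huge bonus and being re-selected — is disabled; a lucky but imprecise early estimate of $a_b^*$ can then persist, and $\mathbb{E}_{x_t}\big[(t-t_{a_b^*})^{-4(1-\tilde x_t)}\big]$ is bounded below by a positive constant whenever large inputs occur with positive probability, hence is not summable in $t$. The way I would try to rescue the bound is to exploit the standing hypothesis $\mathbb{P}\{x_t\le x^-\}>0$: on the random subsequence of slots with $x_t\le x^-$ (so $\tilde x_t=0$) ALTO coincides with plain UCB1 on the full radius $\sqrt{2u_m^2\ln(t-t_n)/k_{t-1,n}}$; a Chernoff bound makes that subsequence of size $\Theta\big(\mathbb{P}\{x_t\le x^-\}\,(t'_b-t_b)\big)$ with overwhelming probability, and the classical analysis then forces every SeV of $\mathcal{N}_b$ to be sampled $\Omega(\ln T)$ times within the epoch and therefore to concentrate; on the complementary slots a suboptimal pull would then require a \emph{fixed-margin} deviation of an already concentrated arm, whose total contribution is $O(1)$ by a union bound over sample counts. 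Making this transfer of concentration across the two interleaved subsequences rigorous — in particular controlling the dependence between $k_{t,n}$, $t_n$ and the input stream, and absorbing the boundary effects of short epochs — is the principal difficulty; once it is carried out for one epoch, the $O(B\ln T)$ bound follows by the same telescoping over $b$ used in Appendix~\ref{a3}.
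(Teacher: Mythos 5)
The statement you are proving is stated in the paper as a \emph{conjecture}: the authors give no proof, only the heuristic that the per-epoch regret should be $O(\ln T)$ by analogy with Theorem~\ref{ada} and Theorem~\ref{t3}, after which summing over $B$ epochs as in Appendix~\ref{a3} would give $O(B\ln T)$. Your first paragraph is exactly that heuristic, so on the decomposition level you and the paper agree. Where you go beyond the paper is in actually attempting the per-epoch argument and locating the obstruction: the product $(1-\tilde x_t)\ln(t-t_n)$ inside the radical of~\eqref{ada_utility} means that on large-input slots the exploration bonus vanishes regardless of $k_{t-1,n}$, so the Hoeffding tail terms for events (a) and (b) are of order $(t-t_n)^{-4(1-\tilde x_t)}$ and are not summable once $\mathbb{P}\{\tilde x_t\ \text{near}\ 1\}>0$. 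This is a correct and genuinely informative diagnosis of why the naive transplant of Appendix~\ref{a1} fails, and your proposed repair --- running the classical UCB1 analysis only on the random subsequence $\{t: x_t\le x^-\}$ where $\tilde x_t=0$, then arguing that the resulting $\Omega(\ln T)$ samples of every arm make fixed-margin deviations on the remaining slots summable --- is precisely the mechanism used in the opportunistic-bandit analysis of \cite{wu2017adaptive} that the paper's own Theorem~\ref{ada} imports through Appendix~\ref{a2}. So the strategy is the right one, and you are honest that it is a strategy and not a proof.

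Two of your intermediate assertions would need real work before this could be called a proof. First, ``inherited samples only tighten the bonus, so they help'' is too quick: at an epoch boundary the identity of the optimal SeV $a_b^*$ can change, so an arm that enters epoch $b$ with a large $k_{t,n}$ and a biased-low empirical mean $\bar u_{t,n}$ (accumulated while it was optimal, or while it was lucky) can suppress exploration of the \emph{new} optimum; the paper's Appendix~\ref{a3} sidesteps this only because in the identical-input case the cited VUCB lemma controls the count of every suboptimal arm from its own occurrence time $t_n$, and the analogous statement with the $(1-\tilde x_t)$ factor present is exactly what is missing. Second, the ``transfer of concentration across the two interleaved subsequences'' is where the dependence between the counters $k_{t,n}$, the occurrence times $t_n$, and the input stream must be controlled; Theorem~\ref{t3} does this only for two arms, periodic deterministic inputs, and degenerate (constant) delays, and the paper explicitly stops there. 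In short: your proposal reproduces the paper's reasoning for leaving the statement as a conjecture, correctly identifies the open step, and sketches the most plausible route to closing it, but it does not close it --- which is the honest and expected outcome here.
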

   	
   	The conjecture follows the insight that,  when the candidate SeV set is identical over time, the learning regret can be derived in a general case with random continuous input and random bit offloading delay, as shown in \eqref{ada2}.
   	When the occurrence time of each SeV is different, within single epoch, the learning regret in \eqref{t3_regret} resembles \eqref{ada2}, both governed by the time order $O(\ln T)$.
   	Following the similar generalization method in \cite{wu2017adaptive}, we may draw a similar conclusion that with random continuous input data size and dynamic SeV set, the learning regret within an epoch is $O(\ln T)$, and the total learning regret is $O(B\ln T)$.

\section{Simulations} \label{sim}
	To evaluate the average delay performance and learning regret of the proposed ALTO algorithm, we carry out simulations in this section.
	We start from a synthetic scenario to evaluate the impact of key parameters, and then simulate a realistic highway scenario using system level simulator Veins\footnote{http://veins.car2x.org/} (VEhicles in Network Simulations) to further verify the proposed ALTO algorithm.
	
	\subsection{Simulation under Synthetic Scenario} 
	We carry out simulations in the synthetic scenario using MATLAB.
	Consider one TaV of interest, with 8 SeVs that appear as candidates during $T=3000$ time periods. 
	The communication range is set to $200 \mathrm{m}$. The distance of the TaV and each candidate SeV ranges within $[10, 200]\mathrm{m}$, and changes randomly from $-10\mathrm{m}$ to $10\mathrm{m}$ in each time period.
	The occurrence and disappearance time of SeVs, as well as their maximum CPU frequency $F_n$ are shown in Table \ref{sev_table}. There are 3 epochs, and each lasts $1000$ time periods. In the first epoch, there are 5 candidate SeVs. At the beginning of the second epoch, a less powerful SeV 5 disappears and SeVs 6 and 7 with higher computing capability appear.
	At the beginning of the third epoch, SeVs 1 and 6 disappear, while SeV 8 with suboptimal computing capability arrives.
	Note that the occurrence and disappearance time of SeVs are unknown to the TaV in prior.
	
  \begin{table*}[!htb]
     	\caption{Candidate SeVs and Maximum CPU Frequency}
     	\label{sev_table}
     	\centering
     	\begin{tabular}{||c||c|c|c|c|c|c|c|c||}
     		\hline
     		Index of SeV & 1 & 2 & 3&4&5&6&7&8\\
     		\hline
     		$F_n$ (GHz)& 3.5&4.5&5&5.5&3&6.5&6&4\\
     		\hline
     		Epoch 1 (time 1$\sim$1000) &${\surd}$ & ${\surd}$ & ${\surd}$& ${\surd}$& ${\surd}$& -- & --&-- \\
     		\hline
     		Epoch 2 (time 1001$\sim$2000)&${\surd}$& ${\surd}$ &  ${\surd}$  &${\surd}$ &$\times$& ${\surd}$& ${\surd}$& -- \\
     		\hline
     		Epoch 3 (time 2001$\sim$3000)& $\times$& ${\surd}$&${\surd}$&${\surd}$& $\times$ & $\times$& ${\surd}$  &${\surd}$ \\
     		\hline
     	\end{tabular}
  \end{table*}
  
  The input data size $x_t$ follows uniform distribution within $[0.2, 1]\mathrm{Mbits}$. The computation intensity is set to $\omega_0=1000 \mathrm{Cycles/bit}$, and the upper and lower thresholds are selected such that $\mathbb{P}\{x\leq x^-\}=0.05$ and $x^+=x^-$. Recall that for each SeV, the allocated CPU frequency $f_{t,n}$ to the TaV is a fraction of the maximum CPU frequency, which is randomly distributed from $20\%F_n$ to $50\%F_n$.	
  The wireless channel state is modeled by an inverse power law $h^{(u)}_{t,n}=h^{(d)}_{t,n}=A_0l^{-2}$, with $A_0=-17.8\mathrm{dB}$, and $l$ is the distance between TaV and SeV \cite{abdulla2016vehicle}.
  Other default parameters include: transmission power $P=0.1\mathrm{W}$, channel bandwidth $W=10\mathrm{MHz}$, noise power $\sigma^2=10^{-13}\mathrm{W}$, and weight factor $\beta_0=0.5$.

	 \begin{figure}[!t]
	 	\centering	
	 	\subfigure[Learning regret.]{\label{syn_compare_regret}			
	 		\includegraphics[width=0.45\textwidth]{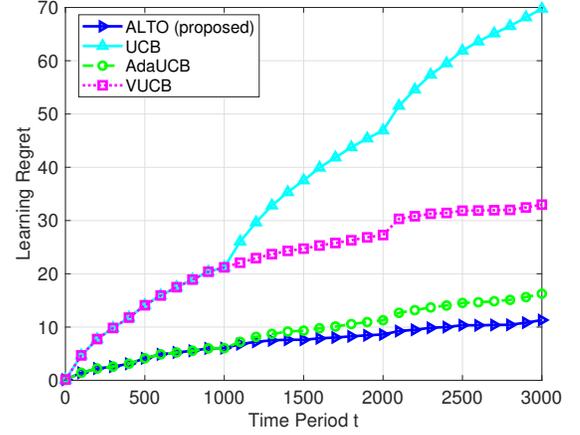}}		
	 	\subfigure[Average delay.]{\label{syn_compare_delay}	
	 		\includegraphics[width=0.45\textwidth]{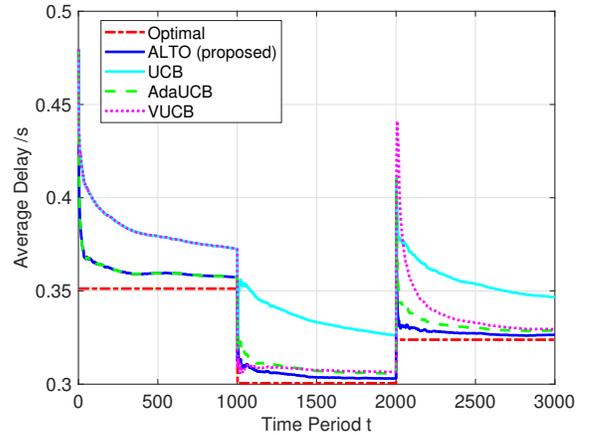}}
    	\caption{Comparison of ALTO algorithm and existing learning algorithms in terms of the learning regret and average delay.}
	 	\label{Syn compare}
	 \end{figure}
	 
	 In Fig. \ref{Syn compare}, the proposed ALTO algorithm is compared with three existing learning algorithms under the MAB framework. 1) \textbf{UCB} is proposed in \cite{auer2002finite}, which is neither input-aware nor occurrence-aware, with padding function $\sqrt{\frac{\beta\ln t}{k_{t-1,n}}}$. 2) \textbf{VUCB} is aware of the occurrence of SeVs, with padding function $\sqrt{\frac{\beta\ln (t-t_n)}{k_{t-1,n}}}$ \cite{bnaya2013social}.  
	 3) \textbf{AdaUCB} is input-aware, with padding function $\sqrt{\frac{\beta(1-\tilde{x}_t)\ln t}{k_{t-1,n}}}$ \cite{wu2017adaptive}. 
	 Note that in the first epoch, VUCB is equivalent to UCB, and AdaUCB is equivalent to ALTO.
	 Besides, in the \textbf{Optimal} genie-aided policy, the TaV always connects to the SeV with minimum expected delay, which is the delay lower bound of the learning algorithm.
	 
	 The comparison of learning regret is shown in Fig. \ref{syn_compare_regret}, which provides two major observations as follows. First, the proposed ALTO algorithm performs the best among the four learning algorithms. To be specific, both VUCB and AdaUCB achieve lower learning regret compared with UCB algorithm, which means that either input-awareness or occurrence-awareness brings adaptivity to the dynamic VEC environment and reduces loss of delay performance through learning. 
	 The joint consideration of these two factors further optimizes the exploration-exploitation tradeoff, and decreases the learning regret by $85\%$, $65\%$ and $30\%$ from that of UCB, VUCB and AdaUCB respectively.
	 Second, the learning regret of ALTO grows sublinearly with time $t$, indicating that the TaV can asymptotically converge to the SeV with optimal delay performance. 
	 As shown in Fig. \ref{syn_compare_delay}, during each epoch, the average delay of ALTO converges faster to the optimal delay than other learning algorithms, and achieves close-to-optimal delay performance.

	\begin{figure}[!t]
		\centering	
			\includegraphics[width=0.45\textwidth]{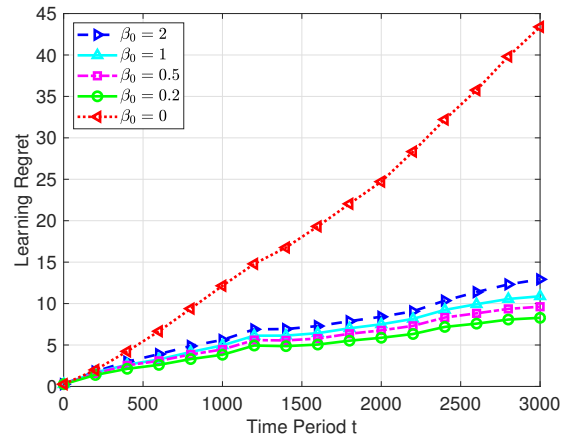}	
			\caption{Learning regret of ALTO under different weight factors $\beta_0$.}
		\label{syn_beta}
	\end{figure} 
    We then consider a single epoch and set SeVs 2-7 in Table \ref{sev_table} as candidates for $3000$ time periods.    
    Fig. \ref{syn_beta} evaluates the impact of weight factor $\beta_0$ on the learning regret.
    When $\beta_0=0$, there is no exploration in the learning process, and the learning regret is drastically worse than those of $\beta_0>0$, since ALTO may stick to a suboptimal SeV for a long time.    
    When $\beta_0>0$, the learning regret grows up slightly as $\beta_0$ increases. 
    Although the existing effort shows that the sublinear learning regret is achieved when $\beta_0>0.5$ \cite{bubeck2010}, in our simulation, the learning regret is lower when $\beta_0=0.2$.
    The reason may be that only a small number of explorations can help the TaV to find the optimal SeV under our settings.

	\begin{figure}[!t]
		\centering	
		\includegraphics[width=0.45\textwidth]{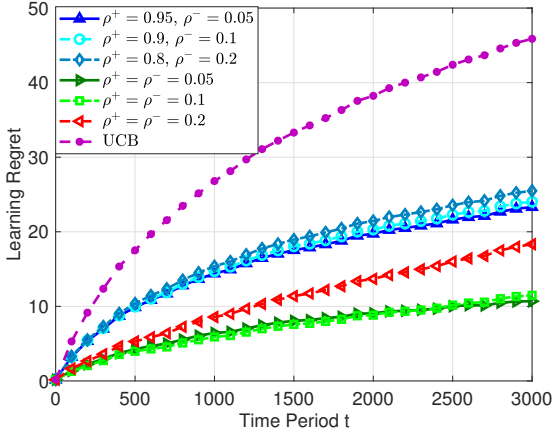}	
		\caption{Learning regret of ALTO under different normalized factors $x^+$ and $x^-$, with $\mathbb{P}\{x\leq x^+\}=\rho^+$ and $\mathbb{P}\{x\leq x^-\}=\rho^-$. }
		\label{syn_norm}
	\end{figure} 

    Finally, we try different pairs of upper and lower thresholds for normalizing the input data size, and evaluate the effect on the learning regret. Define $\mathbb{P}\{x\leq x^+\}=\rho^+$ and $\mathbb{P}\{x\leq x^-\}=\rho^-$, as the probability that the input data size is higher (or lower) than the upper (or lower) threshold.
    Two kinds of thresholds are selected: 1) $\rho^+=\rho^-$, indicating that $ x^+=x^-$ and explorations happen only when $x\leq x^-$. 2) $1-\rho^+=\rho^-$, where explorations also happen when the input data size is between $x^-$ and $ x^+$. 
	As shown in Fig. \ref{syn_norm}, the proposed ALTO algorithm always outperforms UCB algorithm. Moreover, the learning regret under $\rho^+=\rho^-$ is lower than the case when $1-\rho^+=\rho^-$, and achieves the lowest when $\rho^+=\rho^-=0.05$ under our settings, which we set as default.
	
	\subsection{Simulation under Realistic Highway Scenario} 
	In this subsection, simulations are further carried out using system level simulator \emph{Veins}, in order to evaluate the average delay of ALTO under a realistic highway scenario.
	
	The simulation platform Veins integrates a traffic simulator \emph{Simulation of Urban MObility (SUMO)}\footnote{http://www.sumo.dlr.de/userdoc/SUMO.html} and a network simulator \emph{OMNeT++}\footnote{https://www.omnetpp.org/documentation}, and enables to use real maps from \emph{Open Street Map (OSM)}\footnote{http://www.openstreetmap.org/}.
	Vehicular communication protocols including IEEE 802.11p for PHY layer and IEEE 1609.4 for MAC layer are supported by Veins, together with a two-ray interference model \cite{sommer2012} which captures the feature of vehicular channel better.
	
	\begin{figure}[!t]
		\centering	
		\includegraphics[width=0.45\textwidth]{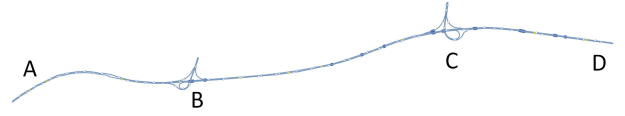}	
		\caption{The highway map used in Veins.}
		\label{map}
	\end{figure} 

	\begin{figure}[!t]
	\centering	
	\subfigure[The arrival probability of SeVs from A to D is $p_{\mathrm{AD}}=0.1$.]{\label{one_d1}			
		\includegraphics[width=0.45\textwidth]{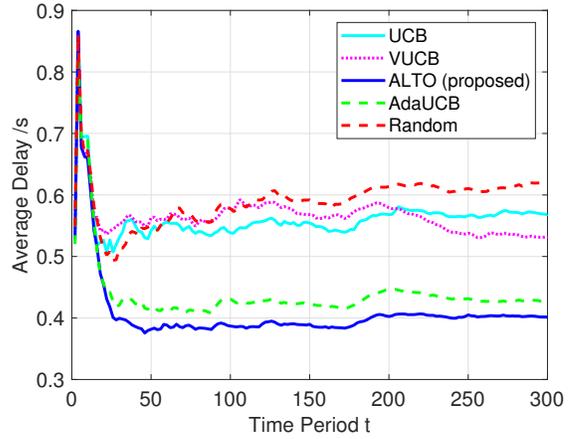}}		
	\subfigure[The arrival probability of SeVs from A to D is $p_{\mathrm{AD}}=0.2$.]{\label{one_d2}	
		\includegraphics[width=0.45\textwidth]{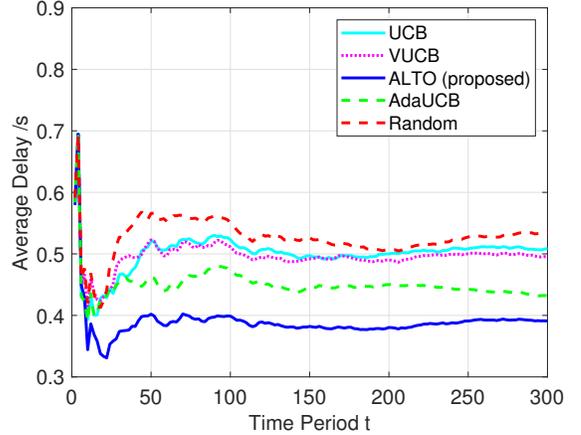}}
	\caption{The average delay performance of ALTO algorithm in the highway scenario with $1$ TaV.}
	\label{delay_one}
\end{figure}

	\begin{figure}[!t]
	\centering	
	\subfigure[The arrival probability of SeVs from A to D is $p_{\mathrm{AD}}=0.1$.]{\label{flow_d1}			
		\includegraphics[width=0.45\textwidth]{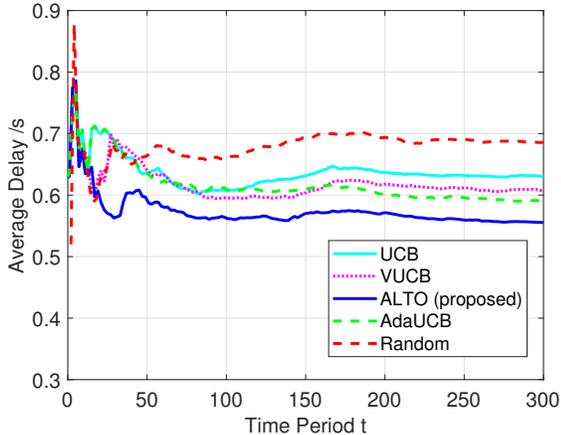}}		
	\subfigure[The arrival probability of SeVs from A to D is $p_{\mathrm{AD}}=0.2$.]{\label{flow_d2}	
		\includegraphics[width=0.45\textwidth]{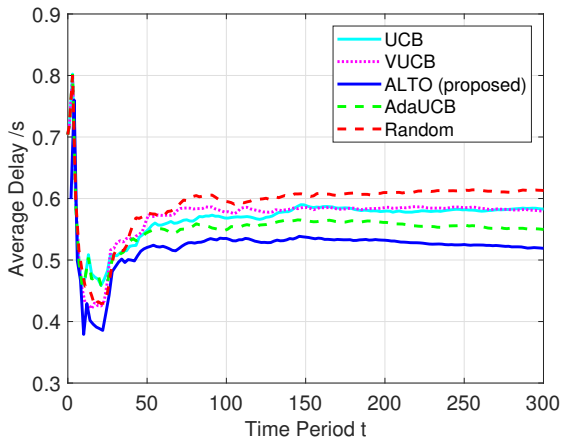}}
	\caption{The average delay performance of ALTO algorithm in the highway scenario with $10$ TaVs, whose inter-arrival time is fixed to $10\mathrm{s}$.}
	\label{delay_flow}
\end{figure}

	A $12\mathrm{km}$ segment of G6 Highway in Beijing is downloaded from OSM and used in our simulation, with two lanes and two ramps, as shown in Fig. \ref{map}. The maximum speed of TaVs and SeVs is set to $60\mathrm{km/h}$.
	The TaV moves from A to D, and SeVs have three routes: A to D, A to C and B to D.
	The arrival of SeVs is modeled by Bernoulli distribution, with probability $p_{\mathrm{AC}}= p_{\mathrm{BD}}=0.05$, and $p_{\mathrm{AD}}$ ranging from $0.1$ to $0.2$ (e.g., $p_{\mathrm{AC}}$ is the probability of the generation of a SeV which departs at A and leaves the road from C at each second).
	Besides the aforementioned UCB, VUCB and AdaUCB algorithms, we also adopt a naive \textbf{Random} policy as a baseline, where the TaV randomly selects a SeV for task offloading in each time period.

	Fig. \ref{delay_one} shows the average delay performance with a single TaV, which means the density of SeV is much higher than that of TaV. 
	And in Fig. \ref{delay_flow}, we consider 10 TaVs that depart every 10 seconds. In this case, each TaV is within some other TaVs' communication range, and thus they might compete for bandwidth and computing resources.
	We make three major observations as follows.
	First, the proposed ALTO algorithm always outperforms the other learning algorithms and the random policy, illustrating that ALTO can adapt to the vehicular environment better. 
	To be specific, compared with the UCB algorithm, when $p_{\mathrm{AD}}=0.1$, ALTO can reduce the average delay by about $30\%$ under single TaV case (Fig. \ref{one_d1}), and  $13\%$ under multi-TaV scenario (Fig. \ref{flow_d1}).
	Second, the average delay grows up when the density of TaV becomes high, since each SeV may serve multiple TaVs simultaneously.
	Besides, as shown in Fig. \ref{delay_flow}, when the density of TaV is high, the average delay performance decreases as the arrival probability of  SeV increases, since the computing resources are more sufficient.

	\section{Conclusions} \label{con}
	In this paper, we have studied the task offloading problem in vehicular edge computing (VEC) systems, and proposed an adaptive learning-based task offloading (ALTO) algorithm to minimize the average offloading delay.
	The proposed algorithm enables each task vehicle (TaV) to learn the delay performance of service vehicles (SeVs) in a distributed manner, without frequent exchange of state information.
	Considering the time-varying features of task workloads and candidate SeVs, we have modified the existing multi-armed bandit (MAB) algorithms to be input-aware and occurrence-aware, so that ALTO algorithm is able to adapt to the dynamic vehicular task offloading environment.
	Theoretical analysis has been carried out, providing a sublinear learning regret of the proposed algorithm.
	We have evaluated the average delay and learning regret of ALTO under a synthetic scenario and a realistic highway scenario, and shown that the proposed algorithm can achieve low delay performance, 
	and decrease the learning regret up to $85\%$ and the average delay up to $30\%$, compared with the classical upper confidence bound algorithm.
	
	As future work, we plan to formulate the task offloading problem based on adversarial MAB framework \cite{bubeck2012}, where no stochastic assumptions are made on the delay performance of SeVs. The adversarial setting makes learning more difficult, but may perform better under more complicated vehicular environments such as urban scenarios. Besides, we plan to consider the joint resource allocation of vehicles and infrastractures in the VEC system, in order to further optimize the delay performance.
	

\appendices{}

\section{Proof of Lemma 1} \label{a1}

In the $b$th epoch, the learning regret is
\begin{align}  \label{a1_1}
R_{b}&=x_0  \mathbb{E}\left[\sum_{t=t_b}^{t'_b}u(t,n)-\mu_b^*\right] \nonumber\\
&=x_0 \mathbb{E}\left[\sum_{n\in \mathcal{N}_b} k_{n,b}u_m\delta_{n,b} \right]\nonumber\\
&=x_0u_m\sum_{n\neq a_b^*} \delta_{n,b} \mathbb{E}[k_{n,b}],
\end{align}
where $k_{n,b}$ is the number of tasks offloaded to SeV $n\in \mathcal{N}_b$ in the $b$th epoch.
According to Lemma 1 in \cite{bnaya2013social} and Theorem 1 in \cite{auer2002finite}, when $\beta_0=2$, the expected number of tasks offloaded to a suboptimal SeV has an upper bound as follows
\begin{align} \label{a1_2}
\mathbb{E}[k_{n,b}] &\leq \frac{8\ln (t'_b-t_n)}{\delta^2_{n,b} }+1+\frac{\pi^2}{3}.
\end{align}

Substituting \eqref{a1_2} into \eqref{a1_1}, we get:
\begin{align}
&R_{b} =x_0u_m\sum_{n\neq a_b^*} \delta_{n,b} \mathbb{E}[k_{n,b}] \nonumber\\
&\leq x_0u_m \left[\sum_{n\neq a_b^*}\frac{8\ln (t'_b-t_n)}{\delta_{n,b}} + \left(1+ \frac{\pi^2}{3}\right)\sum_{n\neq a_b^*}\delta_{n,b} \right].
\end{align}

Thus we can prove Lemma \ref{vucb_epoch}.

\section{Proof of Theorem 1} \label{a3}
We have $t'_b\leq T$ for $\forall b=1,2,...,B$. Following Lemma \ref{vucb_epoch}, the learning regret in the $b$th epoch can be bounded from above as: 
\begin{align}
R_{b}& \leq x_0u_m \left[\sum_{n\neq a_b^*}\frac{8\ln (t'_b-t_n)}{\delta_{n,b}} + \left(1+ \frac{\pi^2}{3}\right)\sum_{n\neq a_b^*}\delta_{n,b} \right]\nonumber\\
& \leq  x_0u_m \left[\sum_{n\neq a_b^*}\frac{8\ln T}{\delta_{n,b}} +O(1)\right].
\end{align}

By summing over the learning regrets of the $B$ epochs, we have:
\begin{align}
R_{T} =\sum_{b=1}^{B}R_{b}\leq  x_0u_m \sum_{b=1}^{B}\left[\sum_{n\neq a_b^*}\frac{8\ln T}{\delta_{n,b}} +O(1)\right].
\end{align}

Thus Theorem \ref{vucb_T} is proved.

\section{Proof of Theorem 2} \label{a2}
When $\beta_0=2$ and $B=1$, the utility function in \eqref{ada_utility} is
\begin{align} 
\hat{u}_{t,n}=\bar{u}_{t-1,n}-u_m\sqrt{\frac{2(1-\tilde{x}_t)\ln t}{k_{t-1,n}}}.
\end{align}	

The decision making function in \eqref{ada_obj} can be written as
\begin{align} \label{a2_1}
a_t&=\arg\min_{n \in \mathcal{N}_1} \hat{u}_{t,n} \nonumber\\
&=\arg\min_{n \in \mathcal{N}_1} \left\{  \bar{u}_{t-1,n}-u_m\sqrt{\frac{2(1-\tilde{x}_t)\ln t}{k_{t-1,n}}} \right\}\nonumber\\
&=\arg\min_{n \in \mathcal{N}_1}\left\{ \frac{\bar{u}_{t-1,n}}{u_m}-\sqrt{\frac{2(1-\tilde{x}_t)\ln t}{k_{t-1,n}}} \right\} \nonumber\\
&=\arg\max_{n \in \mathcal{N}_1}\left\{1- \frac{\bar{u}_{t-1,n}}{u_m}+\sqrt{\frac{2(1-\tilde{x}_t)\ln t}{k_{t-1,n}}} \right\}.
\end{align}  

The learning regret can be written as
\begin{align} \label{a2_2}
&~R_T =\mathbb{E}\left[\sum_{t=1}^{T}x_t(u(t,n)-\mu^*)\right]    \nonumber\\
&=u_m\mathbb{E}\left[\sum_{t=1}^{T}x_t\left\{\left(1-\frac{\mu^*}{u_m}\right)-\left(1-\frac{u(t,n)}{u_m}\right)\right\}\right].
\end{align}

Since $1- \frac{\bar{u}_{t-1,n}}{u_m} \in [0,1]$, and $1-\frac{u(t,n)}{u_m} \in [0,1]$, 
the task offloading problem can be transformed to the opportunistic bandit problem defined in Section \ref{sys} in our previous work \cite{wu2017adaptive}, with equivalent definitions of learning regret, utility and decision making (as shown in \cite{wu2017adaptive}, eq. (1-3)).
By leveraging Lemma 7 and Appendix C.2 in \cite{wu2017adaptive}, we can get the upper bound of $\mathbb{E}[k_{T,n}]$, as shown in Theorem \ref{ada}(1).
By leveraging Theorem 3 and Appendix C.2 in \cite{wu2017adaptive}, we can get the upper bound of the learning regret $R_T$, as shown in Theorem \ref{ada}(2).

\section{Regret Lower Bound} \label{lowerbound}

The regret lower bound of classical UCB algorithms has been investigated in \cite{salomon2011, bubeck2010,bubeck2012}. In the following, we provide a regret lower bound of ALTO in a simple task offloading case, with identical input data size $x_0$ and fixed candidate set of SeVs $\mathcal{N}$ (and thus the index of epoch $b$ is omitted).

\begin{lemma} \label{lowerbounducb}
	When the candidate SeV set is not time-varying, and the input data size is identical over time, the learning regret can be bounded from above as:
	\begin{align}
		R_T \geq x_0 u_m \sum_{n\neq a^*}\frac{\delta_n \ln T}{D(n,a^*)},
	\end{align}
	where $D(n,a^*)$ is the Kullback-Leibler divergence of the bit offloading delay distributions of SeV $n$ and optimal SeV $a^*$.
\end{lemma}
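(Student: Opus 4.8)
The plan is to reduce the statement to the classical asymptotic lower bound of Lai and Robbins. First I would note that with a fixed candidate set $\mathcal{N}$ and constant input $x_t=x_0$ (so that $\tilde{x}_t\equiv 0$), problem \textbf{P3} is exactly a stochastic multi-armed bandit in which arm $n$ produces i.i.d.\ losses $u(t,n)$ of mean $\mu_n$, and ALTO reduces to the normalized UCB rule. In this setting the regret decomposes as $R_T = x_0\sum_{n\neq a^*}(\mu_n-\mu^*)\,\mathbb{E}[k_{T,n}] = x_0 u_m\sum_{n\neq a^*}\delta_n\,\mathbb{E}[k_{T,n}]$, so it is enough to prove that $\mathbb{E}[k_{T,n}]\gtrsim \ln T / D(n,a^*)$ for every suboptimal SeV $n$; summing then yields the lemma, understood in the $\liminf_{T\to\infty}$ sense standard for such bounds.

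The core step is a change-of-measure argument. Fix a suboptimal arm $n$ and a small $\epsilon>0$, and choose an alternative loss distribution $\nu_n'$ for arm $n$ whose mean is strictly below $\mu^*$ --- so that $n$ becomes the unique optimal arm in the perturbed instance --- while keeping $D_{KL}(\nu_n\|\nu_n')\le D(n,a^*)+\epsilon$; this is possible since $D(n,a^*)=D_{KL}(\nu_n\|\nu_{a^*})$ and $\mu_{a^*}=\mu^*<\mu_n$, so a slight perturbation of $\nu_n$ toward $\nu_{a^*}$ already pushes the mean below $\mu^*$. The divergence-decomposition identity then shows that the KL divergence between the laws of the length-$T$ observation sequence under the original and the perturbed instance equals $\mathbb{E}[k_{T,n}]\,D_{KL}(\nu_n\|\nu_n')$. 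Applying the Bretagnolle--Huber inequality to the event $\{k_{T,n}\ge T/2\}$ bounds this divergence below in terms of the probabilities, under the two instances, of the policy playing the wrong arm most of the time.

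Finally I would invoke the consistency (uniform goodness) of UCB-type policies, established in \cite{auer2002finite,bubeck2010}, which forces both of those probabilities to decay faster than any polynomial in $T$. Taking logarithms, rearranging, and letting first $T\to\infty$ and then $\epsilon\to 0$ gives $\liminf_{T\to\infty}\mathbb{E}[k_{T,n}]/\ln T\ge 1/D(n,a^*)$; plugging this into the regret decomposition and summing over $n\neq a^*$ completes the argument, following \cite{salomon2011,bubeck2010,bubeck2012}. The main obstacle is the construction step: one must exhibit a perturbed distribution $\nu_n'$ inside the admissible class that simultaneously makes arm $n$ optimal and keeps $D_{KL}(\nu_n\|\nu_n')$ within $\epsilon$ of $D(n,a^*)$, and then control the interchange of the $T\to\infty$ and $\epsilon\to0$ limits so that the resulting constant is exactly $1/D(n,a^*)$; the remaining manipulations are a routine application of the information-theoretic machinery.
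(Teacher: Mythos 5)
Your proposal is correct and follows the same skeleton as the paper's proof: both observe that with a fixed candidate set and constant input ALTO reduces to classical UCB, use the decomposition $R_T = x_0 u_m\sum_{n\neq a^*}\delta_n\,\mathbb{E}[k_{T,n}]$, and conclude from the asymptotic per-arm bound $\mathbb{E}[k_{T,n}]\ge \ln T/D(n,a^*)$. The only difference is that the paper obtains that per-arm bound by directly citing Theorem 5 of \cite{salomon2011}, whereas you sketch its derivation via the standard Lai--Robbins change-of-measure machinery (perturbed instance, divergence decomposition, Bretagnolle--Huber, and uniform goodness of UCB-type policies); this is more self-contained but establishes exactly the ingredient the citation supplies.
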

\begin{proof}
	With fixed SeV set and identical input data size, the proposed ALTO algorithm reduces to the classical UCB algorithm. According to \cite{salomon2011}, Theorem 5, when $T\rightarrow +\infty$, the number of tasks offloaded to a suboptimal SeV $n$ can be bounded as follows
	\begin{align} \label{lowerboundtime}
		\mathbb{E}[k_{T,n}]\geq \frac{\ln T}{D(n,a^*)}.
	\end{align}	
	Substituting \eqref{lowerboundtime} into \eqref{a1_1}, the learning regret $R_T$ can be bounded as
	\begin{align}
		R_T=&x_0u_m\sum_{n\neq a^*} \delta_{n} \mathbb{E}[k_{T,n}] 	\geq x_0 u_m \sum_{n\neq a^*}\frac{\delta_n \ln T}{D(n,a^*)}.	
	\end{align}
\end{proof}

\section{Proof of Theorem 3} \label{a4}
The proof of Theorem \ref{t3} follows the similar idea in \cite{wu2017adaptive}, while the major difference is that the two SeVs appear at $t_1$ and $t_2$ respectively.
Let $t_0=\max\{t_1,t_2\}$. We only needs to bound the learning regret in the second epoch, from time $t_0$ to time $T$. 

We first bound the number of tasks offloaded to the suboptimal SeV.

\begin{lemma} \label{t3_lemma1}
	With periodic input of tasks and fixed bit offloading delay of SeVs, 
	\begin{align} \label{t3_l1}
		k^{(2)}_{t,2}\leq \frac{\beta_0 \ln t}{\Delta^2}+1.
	\end{align}
\end{lemma}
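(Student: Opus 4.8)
## Proof Proposal for Lemma \ref{t3_lemma1}

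The plan is to bound $k^{(2)}_{t,2}$ by tracking the event that SeV 2 is selected in the second epoch, and showing that once SeV 2 has been chosen sufficiently many times, the padding (exploration) term can no longer compensate for the fixed gap $\mu_2 - \mu_1$. The key observation is that, because the bit offloading delays are deterministic ($u(t,n) = \mu_n$), the empirical means are exact after the initialization phase: $\bar u_{t,n} = \mu_n$ for both SeVs once each has been connected to at least once. So there is no stochastic concentration argument needed here — the whole analysis reduces to a purely deterministic counting argument comparing the two utility functions.

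First I would recall that SeV 2 is selected at time $t$ (in the second epoch, $t \ge t_0$) only when $\hat u_{t,2} \le \hat u_{t,1}$, i.e.
\begin{align}
\mu_2 - \sqrt{\frac{\beta(1-\tilde x_t)\ln(t - t_2)}{k_{t-1,2}}} \le \mu_1 - \sqrt{\frac{\beta(1-\tilde x_t)\ln(t - t_1)}{k_{t-1,1}}}.
\end{align}
Dropping the nonnegative subtracted term on the right-hand side gives the necessary condition $\mu_2 - \mu_1 \le \sqrt{\beta(1-\tilde x_t)\ln(t-t_2)/k_{t-1,2}}$. Using $\beta = \beta_0 u_m^2$, $\mu_2 \le u_m$ (so in the normalized scaling we may take $u_m = \mu_2$ in this two-arm fixed-delay setting, or carry $u_m$ through), $\tilde x_t \ge 0$, $1 - \tilde x_t \le 1$, and $\ln(t - t_2) \le \ln t$, this forces
\begin{align}
(\mu_2 - \mu_1)^2 \le \frac{\beta_0 \mu_2^2 \ln t}{k_{t-1,2}}, \quad\text{i.e.}\quad k_{t-1,2} \le \frac{\beta_0 \ln t}{\Delta^2},
\end{align}
since $\Delta = (\mu_2 - \mu_1)/\mu_2$. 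Therefore, whenever SeV 2 is picked at time $t$ it must have been picked fewer than $\beta_0 \ln t / \Delta^2$ times before; adding the single mandatory initialization pull of SeV 2 at $t = t_2$ yields $k^{(2)}_{t,2} \le \beta_0 \ln t / \Delta^2 + 1$, which is \eqref{t3_l1}.

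The step I expect to require the most care is the bookkeeping around the initialization and the role of $\tilde x_t$: one must check that after both SeVs have appeared and been connected once, $\bar u$ really equals the true mean so that no concentration slack is needed, and one must make sure that when $\tilde x_t = 1$ (the odd, large-input periods) the padding term vanishes and SeV 2 can only be chosen if it is already empirically optimal — which, since $\mu_1 < \mu_2$, never happens, so all selections of SeV 2 in the second epoch occur during even periods where $\tilde x_t = 0$. This is exactly why the occurrence-awareness ($\ln(t-t_n)$ instead of $\ln t$) only helps and the bound above, derived with the looser $\ln t$, remains valid. Aside from this, the argument is a direct adaptation of the standard deterministic UCB counting bound and of the corresponding lemma in \cite{wu2017adaptive}, with the only new ingredient being that the two arms have different "clocks" $t_1 \ne t_2$, handled simply by $\ln(t - t_n) \le \ln t$.
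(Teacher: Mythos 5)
Your argument is correct and is essentially the paper's own proof: both reduce to the deterministic observation that a selection of SeV~2 at time $t$ forces $\mu_2-\mu_1\le\sqrt{\beta(1-\tilde{x}_t)\ln(t-t_2)/k_{t-1,2}}$, hence $k_{t-1,2}\le\beta_0\ln t/\Delta^2$ (the paper phrases this as a contradiction at the first violating time, you phrase it directly via the last selection time, and you handle the two input phases at once via $1-\tilde{x}_t\le 1$ where the paper treats them "similarly"). One small inaccuracy in your side remark: in odd periods $\tilde{x}_t=1-\epsilon_1/(1-\epsilon_0)<1$ for $\epsilon_1>0$, so the padding does not vanish and SeV~2 can in principle be selected there too — but this does not matter, since your main bound already covers both cases.
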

\begin{proof}
	 First, \eqref{t3_l1} holds for $t=t_0$ and $t_0+1$.
	 For $t_0\geq t_0+2$, we prove the lemma by contradiction. For simplicity, we use $k_{t,2}$ rather than $k^{(2)}_{t,2}$.
	 If \eqref{t3_l1} does not hold, there exists at least one $\tau\geq t_0+2$, such that 
	 \begin{align}
	 	k_{\tau-1,2}&\leq \frac{\beta_0 \ln (\tau-1)}{\Delta^2}+1, \\
	 	k_{\tau,2}&> \frac{\beta_0 \ln \tau}{\Delta^2}+1.
	 \end{align}
	 Since $\ln \tau>\ln (\tau-1)$, SeV 2 is selected at time $\tau$.
	 
	 According to the utility function in \eqref{ada_obj}, when $x_t=\epsilon_0$,
	 \begin{align}
	 	\mu_1-\sqrt{\frac{\beta\ln (\tau-t_1)}{k_{\tau-1,1}}} \geq \mu_2-\sqrt{\frac{\beta\ln (\tau-t_2)}{k_{\tau-1,2}}}.
	 \end{align}
	 Thus $\Delta=\frac{\mu_2-\mu_1}{\mu_2}<\frac{1}{\mu_2}\sqrt{\frac{\beta\ln (\tau-t_2)}{k_{\tau-1,2}}}\leq \sqrt{\frac{\beta_0\ln \tau}{k_{\tau-1,2}}}$, and $k_{\tau-1,2}<\frac{\beta_0 \ln \tau}{\Delta^2}$.
	 Then $k_{\tau,2}\leq k_{\tau-1,2}+1<\frac{\beta_0 \ln \tau}{\Delta^2}+1$.
	 
	 Similar proof can be carried out when $x_t=1-\epsilon_1$. Thus we prove Lemma \ref{t3_lemma1}.	 
\end{proof}

Then we prove that the proposed ALTO algorithm can explore sufficiently, such that when the input data size is large, it always selects the optimal SeV 1.
\begin{lemma}\label{t3_lemma2}
	With periodic input of tasks and fixed bit offloading delay of SeVs, there exists $T_1$, such that $a_t=1$ when $t\geq T_1$ and $x_t=1-\epsilon_1$. 
\end{lemma}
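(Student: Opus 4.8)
The plan is to show that after enough exploration rounds (which all occur at even times, when $x_t=\epsilon_0$ and $\tilde x_t=0$), the empirical estimate of the suboptimal SeV $2$ becomes so accurate that, on odd rounds ($x_t=1-\epsilon_1$, $\tilde x_t=1-\frac{\epsilon_1}{1-\epsilon_0}>0$), the padding term — which is shrunk by the factor $\sqrt{1-\tilde x_t}=\sqrt{\frac{\epsilon_1}{1-\epsilon_0}}$ relative to the even-round padding — can no longer compensate for the true gap $\mu_2-\mu_1$. Since the bit offloading delays are deterministic ($u(t,n)=\mu_n$), the empirical means satisfy $\bar u_{t,n}=\mu_n$ exactly once SeV $n$ has been sampled at least once, so there is no concentration argument needed: the utility of SeV $n$ at an odd time $t$ is simply $\mu_n-\sqrt{\beta(1-\tilde x_t)\ln(t-t_n)/k_{t-1,n}}$.

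First I would invoke Lemma \ref{t3_lemma1} to control $k_{t,2}$ from above: $k^{(2)}_{t,2}\le \frac{\beta_0\ln t}{\Delta^2}+1$. Next I would establish a matching lower bound on $k_{t,1}$: since in the second epoch every round goes to SeV $1$ or SeV $2$, we have $k_{t,1}\ge (t-t_0)-k^{(2)}_{t,2}\ge (t-t_0)-\frac{\beta_0\ln t}{\Delta^2}-1$, which grows linearly in $t$. Then, at an odd time $t$ with $x_t=1-\epsilon_1$, SeV $1$ is preferred over SeV $2$ provided
\begin{align}
\mu_1-\sqrt{\frac{\beta(1-\tilde x_t)\ln(t-t_1)}{k_{t-1,1}}}\le \mu_2-\sqrt{\frac{\beta(1-\tilde x_t)\ln(t-t_2)}{k_{t-1,2}}}.
\end{align}
Using $\mu_2-\mu_1=\mu_2\Delta$, $\beta=\beta_0\mu_2^2$, and the lower bound on $k_{t-1,1}$ together with $k_{t-1,2}\ge 1$, it suffices that $\sqrt{\frac{\beta(1-\tilde x_t)\ln(t-t_1)}{k_{t-1,1}}}\le \mu_2\Delta$, i.e. that $k_{t-1,1}\ge \frac{\beta_0(1-\tilde x_t)\ln(t-t_1)}{\Delta^2}$. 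Since the left side is $\ge (t-1-t_0)-\frac{\beta_0\ln(t-1)}{\Delta^2}-1$ grows linearly while the right side grows like $\ln t$, there exists a finite $T_1$ beyond which this holds for all odd $t$; taking the max with $t_0$ gives the claimed $T_1$, and the argument that a newly inflated padding on SeV $2$ cannot overturn this is handled because the padding there is also scaled by the same $\sqrt{1-\tilde x_t}$ and $k_{t-1,2}\ge 1$.

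The main obstacle I anticipate is bookkeeping the interplay between the two padding terms cleanly: one must make sure that while $k_{t,2}$ is still increasing (so its padding is not yet negligible), SeV $2$ is only ever chosen on \emph{even} rounds, never odd ones — otherwise the linear lower bound $k_{t,1}\ge (t-t_0)-k_{t,2}$ would be too weak at odd rounds near the transition. This is precisely what the lemma asserts, so the argument is really a simultaneous induction: assuming SeV $2$ is picked only at even times up to $t$, derive the $\ln t$ bound on $k_{t,2}$ (via Lemma \ref{t3_lemma1}) and the linear lower bound on $k_{t,1}$, and then check the displayed inequality closes for odd $t\ge T_1$. I would therefore phrase the proof as choosing $T_1$ to be the smallest time such that $(T_1-t_0)-\frac{\beta_0\ln T_1}{\Delta^2}-1\ge \frac{\beta_0\ln T_1}{\Delta^2}$ (a finite quantity since linear growth dominates logarithmic), and verify the chain of inequalities above for all odd $t\ge T_1$.
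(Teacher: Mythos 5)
Your proposal has the right high-level intuition (the odd-round padding is shrunk by $\sqrt{\epsilon_1/(1-\epsilon_0)}$, and eventually the gap dominates), but the chain of inequalities you propose controls the wrong padding term, and this is a genuine gap. Writing $P_n=\sqrt{\beta(1-\tilde x_t)\ln(t-t_n)/k_{t-1,n}}$, the condition for SeV $1$ to be selected is $\mu_1-P_1\le\mu_2-P_2$, i.e.\ $P_2\le\mu_2\Delta+P_1$. Since $P_1\ge 0$, the natural sufficient condition is $P_2\le\mu_2\Delta$ — a bound on the \emph{suboptimal} arm's padding, which requires a \emph{lower} bound on $k_{t-1,2}$ of order $\frac{\beta_0\epsilon_1\ln t}{(1-\epsilon_0)\Delta^2}$. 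You instead reduce to ``$P_1\le\mu_2\Delta$'' and control $k_{t-1,1}$ from below; but making $P_1$ small only \emph{raises} $\hat u_{t,1}$ and does nothing to prevent SeV $2$'s padding from making $\hat u_{t,2}$ small. With only $k_{t-1,2}\ge 1$, you get $P_2=O(\sqrt{\ln t})\to\infty$, so the inequality you need does not close. Note also that the upper bound on $k_{t,2}$ from Lemma \ref{t3_lemma1} works in the wrong direction here: it lower-bounds $P_2$ rather than upper-bounding it.

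The missing idea — and the heart of the paper's proof — is a matching \emph{lower} bound on the exploration of SeV $2$: the full-strength padding at even rounds ($\tilde x_t=0$) forces SeV $2$ to be sampled at least $f(\tau)\approx\frac{\beta_0\ln(2\tau)}{\Delta^2}$ times by time $2\tau$ (proved by contradiction via an auxiliary function $h$, using that if $k_{2\tau,2}$ stalled then SeV $1$ was chosen at the even time $2\tau$, which forces $k_{2\tau-1,2}\ge h(\tau)$). Plugging this lower bound into $P_2$ at an odd time gives $P_2\lesssim\mu_2\Delta\sqrt{\epsilon_1/(1-\epsilon_0)}<\mu_2\Delta$, hence $\hat u_{t,2}>\mu_1\ge\hat u_{t,1}$ for $t\ge T_1$. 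Your ``simultaneous induction'' framing does not supply this lower bound, so the proof as proposed would fail at the displayed inequality.
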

\begin{proof}
	First, define an auxiliary function
	\begin{align}
		h(t)=\frac{\beta_0 \ln (2t-t_2)}{\Delta^2}\left(  1+ \sqrt{\frac{2\beta_0\ln 2t}{\Delta^2(2t-1-t_0)}}  \right)^{-2},
	\end{align}
	 and $f(t)=\int_{t_0}^{t} \min(h'(s),1)\mathrm{d}s+h(t_0)$.
	We prove that $k_{2t,2}\geq f(t)$. It is easy to see that $k_{2t,2}\geq f(t)$ holds when $t=t_0$ and $t_0+1$.
	Assume that there exists $\tau\geq t_0+2$, such that $k_{2(\tau-1),2}\geq f(\tau-1)$, but $k_{2\tau,2}< f(\tau)$.
	Since $f(\tau)-f(\tau-1)=\int_{\tau-1}^{\tau} \min(h'(s),1)\mathrm{d}s\leq 1$, and $k_{2(\tau-1),2}$, $k_{2\tau-1,2}$, $k_{2\tau,2}$ are integers, we have $k_{2(\tau-1),2}=k_{2\tau-1,2}=k_{2\tau,2}$.
	Thus SeV 1 is selected at time $2\tau$.
	
	When $t=2\tau$, $x_t=\epsilon_0$. According to the utility function in \eqref{ada_obj}, we have
	\begin{align}
	\mu_1-\sqrt{\frac{\beta\ln (2\tau-t_1)}{k_{2\tau-1,1}}} \leq  \mu_2-\sqrt{\frac{\beta\ln (2\tau-t_2)}{k_{2\tau-1,2}}}.
	\end{align}
	Thus
	\begin{align}
		\Delta=\frac{\mu_2-\mu_1}{\mu_2}\geq \sqrt{\frac{\beta_0\ln (2\tau-t_2)}{k_{2\tau-1,2}}}- \sqrt{\frac{\beta_0\ln (2\tau-t_1)}{k_{2\tau-1,1}}}.
	\end{align}
	When $\tau$ is sufficiently large, $k_{2\tau-1,1}\geq (2\tau-1-t_0)/2$.
	Then
	\begin{align}
	\Delta=\frac{\mu_2-\mu_1}{\mu_2}\geq \sqrt{\frac{\beta_0\ln (2\tau-t_2)}{k_{2\tau-1,2}}}- \sqrt{\frac{2\beta_0\ln (2\tau-t_1)}{2\tau-1-t_0}}.
	\end{align}
	And thus $k_{2\tau,2}=k_{2\tau-1,2}\geq h(\tau)\geq f(\tau)$, which contradicts the assumption.
	
	Therefore, $k_{2t,2}\geq f(t)$ holds for any $t\geq t_0$.
	
	When $x_t=1-\epsilon_1$, $t$ is odd. Let $t=2\tau+1$, the utility function of SeV 2 is
	\begin{align}
		\hat{u}_{t,2}&=\bar{u}_{t-1,2}- \sqrt{\frac{\beta \epsilon_1\ln (2\tau+1-t_2)}{(1-\epsilon_0)k_{2\tau,2}}} \nonumber\\
		&\geq \mu_2-\sqrt{\frac{\beta \epsilon_1\ln (2\tau+1-t_2)}{(1-\epsilon_0)f(\tau)} }
	\end{align}
	Note that $\frac{1-\epsilon_0}{\epsilon_1}>1$.
	There exists $T_1$, such that when $t\geq T_1$, $\frac{\ln (2\tau+1-t_2)}{f(\tau)}<\frac{\Delta^2}{\beta_0}\frac{1-\epsilon_0}{\epsilon_1}$.
	Therefore,
	\begin{align}
		\hat{u}_{t,2}&\geq \mu_2-\sqrt{\frac{\beta \epsilon_1\ln (2\tau+1-t_2)}{(1-\epsilon_0)f(\tau)} }\nonumber\\
		&>\mu_2-\sqrt{ \frac{\beta \epsilon_1}{(1-\epsilon_0) } \frac{\Delta^2}{\beta_0}\frac{1-\epsilon_0}{\epsilon_1} } \nonumber\\
		&=\mu_2-\mu_2\Delta=\mu_1>\hat{u}_{t,1},
	\end{align}
	which indicates that SeV 1 is selected. Thus Lemma \ref{t3_lemma2} is proved.	 
\end{proof}

Finally, by letting $\beta_0=2$ and combining Lemma \ref{t3_lemma1} and Lemma \ref{t3_lemma2}, Theorem \ref{t3} can be derived.

	%

\end{document}